\documentclass[journal]{IEEEtran}
\usepackage{algorithm}

\usepackage{algpseudocode}
\usepackage{moreverb}
\usepackage{amsmath}
\usepackage{colortbl}
\usepackage{wrapfig}
\usepackage{bm}
\usepackage{tabularx}
\usepackage{cite}
\usepackage{graphicx}
\usepackage{amsfonts}
\usepackage{amsthm}
\usepackage{amssymb}
\usepackage{cases}
\usepackage{bbm}
\usepackage{stfloats}
\usepackage{subcaption}
\usepackage{xcolor}
\usepackage{xpatch}
\usepackage{epstopdf}
\usepackage[justification=centering]{caption}

\usepackage{balance}

\makeatletter
\ExplSyntaxOn

\ExplSyntaxOff

\newcommand\bib@setcolor[1]{%
	\ifcsname bib@colored@#1\endcsname
	\expandafter\color\expandafter{\csname bib@colored@#1\endcsname}
	\else
	\normalcolor
	\fi
}

\xpatchcmd\@bibitem
{\item}
{\bib@setcolor{#1}\item}
{}{\fail}

\xpatchcmd\@lbibitem
{\item}
{\bib@setcolor{#2}\item}
{}{\fail}
\makeatother

\ifCLASSOPTIONcaptionsoff
\usepackage[nomarkers]{endfloat}
\let\MYoriglatexcaption\caption
\renewcommand{\caption}[2][\relax]{\MYoriglatexcaption[#2]{#2}}
\fi
\usepackage{color}

\newtheorem{Definition}{Definition}
\newtheorem{Theorem}{Theorem}
\newtheorem{Lemma}{Lemma}

\allowdisplaybreaks[4]
\usepackage[
colorlinks=false,     
pdfborder={0 0 0},    
hidelinks             
]{hyperref}

\begin{document}

\title{
  \fontsize{17pt}{19pt}\selectfont 
  Channel Knowledge Empowered Finite-Blocklength Rate-Splitting Transmission for High-Mobility Autonomous Driving
}

\author{Yi~Wang,
~Yingyang~Chen,~\IEEEmembership{Senior Member,~IEEE},~Feng~Bai,~Li~Wang,~\IEEEmembership{Senior Member,~IEEE},~and Gang~Feng,~\IEEEmembership{Fellow,~IEEE}

\thanks{
}

\thanks{Y. Wang, Y. Chen and F. Bai are with the Department of Electronic Engineering, College of Information Science and Technology, Jinan University, Guangzhou 510632, China (e-mail: yiwang@stu2023.jnu.edu.cn; chenyy@jnu.edu.cn; bbf2392@stu2025.jnu.edu.cn).}

\thanks{L. Wang is with the School of Computer Science, Beijing University of Posts
and Telecommunications, Beijing 100876, China (e-mail: liwang@bupt.edu.cn).}
\thanks{G. Feng is with the National Key Laboratory of Science and Technology on Communications, University of Electronic Science and Technology of China, Chengdu 611731, China (e-mail: fenggang@uestc.edu.cn).}
    
 }

\markboth{}%
{}


\maketitle

\begin{abstract}

To meet the extended ultra-low latency and high reliability (xURLLC) requirements for autonomous driving systems, multiple access schemes must operate reliably in high-mobility and complex propagation environments. Recently, rate-splitting multiple access (RSMA) has emerged as a promising multi-user transmission framework, showing robustness in dynamic situations where imperfect and outdated channel state information (CSI) is prevalent. 
Moreover, the advanced sensing, localization, and on-board computation capabilities of autonomous driving vehicles facilitate the construction of a channel knowledge map (CKM), which is a key enabler for environment-aware communications in future 6G networks. 
In this work, we propose a CKM empowered finite-blocklength (FBL) RSMA for downlink autonomous driving system. The location-dependent large-scale channel information provided by CKM is exploited in RSMA to develop a refined rate-splitting design. The min-rate performance of FBL rate splitting is analyzed explicitly to ensure user fairness.  
We derive a new and tight closed-form bound for the private-stream ergodic rate. Combined with the closed-form common-stream expression, an efficient optimization design of rate-splitting ratios has been formulated. Numerical results show that the CKM empowered FBL RSMA outperforms space-division multiple access (SDMA) and non-orthogonal multiple access (NOMA), particularly in high-mobility scenarios. Its performance is improved by a data-based CKM, which provides more accurate large-scale channel information than model-based approaches and enables more precise common-stream allocation. The results also reveal that RSMA is sensitive to errors in large-scale channel knowledge, emphasizing the importance of accurate CKM information for optimal rate-splitting.

\end{abstract}
\vspace{-1mm}
\begin{IEEEkeywords}
Rate-splitting multiple access (RSMA), channel knowledge map (CKM), ergodic rate, outdated channel state information at the transmitter (CSIT), ultra-reliable and low-latency communication (URLLC).
\end{IEEEkeywords}
\vspace{-2mm}

%

\section{Introduction}
\vspace{-1mm}

 The rapid development of autonomous systems has given rise to a new paradigm of distributed mobile intelligence, encompassing connected vehicles, autonomous vehicles (AVs), unmanned aerial vehicles (UAVs), and robotic platforms \cite{10138317}. These high-mobility multi-agent systems rely on seamless wireless connectivity to support mission-critical applications such as cooperative perception, collision avoidance, and real-time motion control. Such applications demand frequent exchanges of short packets under stringent latency and reliability constraints \cite{10518091}. In these scenarios, traditional throughput-oriented metrics fail to capture the operational requirements of ultra-reliable low-latency communication (URLLC). This gap has motivated extended URLLC (xURLLC) in sixth-generation (6G) networks, which is a paradigm tailored to the highly dynamic nature of multi-agent coordination and its associated performance requirements, particularly age of information and round-trip latency \cite{10173693}.


Meeting these stringent performance requirements in highly dynamic environments faces multiple challenges. First, short-packet transmissions are inherently constrained by both block length and target block error rate (BLER), resulting in a pronounced rate loss relative to the classical Shannon capacity \cite{10173693,10106132}. Moreover, in such short-packet scenarios, the relative overhead of pilots for channel estimation increases significantly, and high mobility makes it difficult to acquire accurate channel state information at the transmitter (CSIT), thereby further limiting the effective data rate \cite{9170653}. In addition, ensuring continuous connectivity between moving vehicles and base stations (BSs) is non-trivial in complex propagation environments, where blockage, multi-path fading, and multi-user interference can disrupt communication links \cite{telecom6010013}. 

To address the fundamental challenge of multi-user interference in wireless networks, rate-splitting multiple access (RSMA) has been proposed as a flexible physical-layer strategy. Originally introduced from an information-theoretic perspective to improve the efficiency of multi-user communication \cite{485709}, RSMA has recently evolved into a practical framework for modern multi-antenna systems. Its key principle is to split each user’s message into a common part, decoded by multiple users, and a private part intended for a single user. By allowing part of the interference to be decoded while treating the remaining as noise, RSMA provides a unified interference management mechanism that intelligently bridges space division multiple access (SDMA) and non-orthogonal multiple access (NOMA) \cite{10400885}. As a result, RSMA has been shown to exhibit strong robustness to channel uncertainty, and is expected to maintain satisfactory performance even when accurate instantaneous CSIT is difficult to obtain\cite{9491092}. This robustness makes RSMA a promising candidate for high-mobility multi-agent systems.

While RSMA mitigates the adverse effects of imperfect CSIT, its performance still depends on the availability of channel knowledge. Recently, channel knowledge maps (CKMs) have been proposed as a key enabler for environment-aware communications in future 6G networks \cite{10430216}. Specifically, a CKM is a location-tagged repository that stores prior information about the wireless propagation environment, including large-scale channel characteristics such as path loss, shadowing, blockage probabilities, and other statistical channel properties. By exploiting these slowly varying known parameters inferred from the mobile agent, CKMs enable the transmitter to reduce its reliance on frequent instantaneous pilot-based channel estimation. This capability is particularly beneficial for short-packet transmissions in high-mobility multi-agent scenarios, where pilot overhead can otherwise dominate the available resources \cite{11153394}. In extreme cases, where real-time channel estimation yields only marginal performance gains, CKMs can even guide beamforming and transmission decisions without dedicated pilot signals, relying primarily on stored prior information \cite{10108969}.

Notably, the inherent robustness of RSMA to CSIT imperfections makes it a natural candidate to exploit CKM-assisted channel knowledge, even when the stored information does not capture perfect CSI. By combining the interference management flexibility of RSMA with the environment-aware prior knowledge provided by CKMs, the system is expected to achieve improved robustness, reduced pilot overhead, and more reliable performance under short-packet, high-mobility conditions. Therefore, this motivates a systematic investigation of CKM-assisted RSMA transmission in this paper, aiming to enhance performance under stringent xURLLC requirements in realistic propagation environments.

\subsection{Related Work}

\paragraph{Rate-Splitting Multiple Access}

RSMA is a versatile multiple-access framework for next-generation wireless networks. Research initially focused on multi-antenna and massive MIMO systems, where message splitting enables flexible interference management and improved spectral efficiency \cite{9894281}. From the perspective of resource allocation, the joint design of rate allocation and power control has also been investigated for RSMA networks. 
In~\cite{9461768}, Yang \emph{et al.} optimized the common-rate allocation and transmit power control in downlink RSMA systems, providing an important reference for practical RSMA resource management. RSMA was further applied in a secure mmWave communication design to improve confidentiality in the presence of eavesdroppers \cite{10445444}, demonstrating its adaptability to diverse 6G physical-layer technologies. Additionally, RSMA has been investigated in UAV-assisted downlink networks under both infinite- and finite-blocklength regimes, where practical impairments such as imperfect channel state information (CSI) and imperfect successive interference cancellation (SIC) were explicitly considered \cite{9953049}. Beyond these physical-layer scenarios, RSMA has also been introduced into vehicular edge computing. In particular, Lin \emph{et al.} proposed an RSMA-assisted distributed computation offloading framework for vehicular networks based on stochastic geometry, demonstrating the potential of RSMA for efficient task offloading in highly dynamic vehicular environments \cite{10882982}. More recently, RSMA has been further extended to several key 6G-enabling technologies, including non-terrestrial networks (NTN) \cite{10396844}, terahertz communications \cite{11223134}, integrated sensing and communication (ISAC) \cite{11316272}, and reconfigurable intelligent surfaces (RIS) \cite{11426895}. These studies collectively highlight the cross-layer and multi-domain applicability of RSMA, suggesting its promising role in future 6G wireless systems.


Motivated by the stringent latency and reliability requirements of xURLLC, recent studies have extended RSMA to a short-packet communication (SPC) regime under finite blocklength (FBL) constraints and imperfect CSIT. Explicitly, \cite{10478577} developed a closed-form ergodic analysis of RSMA-enabled SPC systems for the first time, deriving the statistics of common and private stream SINRs. Then, \cite{11159579} further presented a refined ergodic bound, and leveraged it to maximize the ergodic sum rate under QoS constraints by jointly optimizing the global power coefficient, private power allocation, and common rate splitting. RSMA has also been investigated for terahertz communications \cite{10251998} and cell-free massive MIMO systems for URLLC \cite{10535307}. These works demonstrate that RSMA can effectively handle imperfect CSIT under stringent latency requirements. Most existing studies, however, focus on small-scale fading imperfections and generally assume perfect knowledge of large-scale channel parameters.

\paragraph{Channel Knowledge Maps}

CKM has been proposed as an environment-aware paradigm for future 6G networks, providing location-dependent channel information such as large-scale fading statistics or site-specific propagation characteristics \cite{10430216}. Specifically, CKMs can be constructed using model-based methods \cite{9034493}, interpolation techniques \cite{6333748,7817747}, or data-driven machine learning approaches \cite{9771802,7997333,9097850,9354041,9523765}. By storing this prior knowledge, CKMs can support a variety of functions, including reducing pilot overhead, assisting beam management, enabling environment-aware resource \cite{10430216}. 



CKMs were further studied for a wide range of communication tasks, including training-free beam alignment for millimeter-wave systems using channel path maps and beam index maps \cite{9473871}, and system-level optimization such as multi-UAV placement via Kriging-based map construction \cite{10272353}. Besides, CKMs were explored for efficient channel estimation in extremely large antenna array (ELAA) systems, where a small number of pilot signals combined with channel angle distance maps can provide accurate CSI while significantly reducing pilot overhead \cite{11153394}. Furthermore, CKMs have also been applied to high-mobility scenarios for dual-domain tracking and predictive beamforming, where an extended Kalman filter (EKF) estimator leverages CKM priors for user state prediction and uses beam-domain feedback to refine angle-of-arrival (AoA) tracking and beam transitions under both line-of-sight (LoS) and non-line-of-sight (NLoS) conditions \cite{11162289}. These studies demonstrate the versatility of CKMs in addressing CSI acquisition challenges across various wireless communication scenarios. However, most works focused on map construction or on single-user or conventional SDMA settings rather than on a refined multi-user interference management framework such as RSMA.


\subsection{Motivations and Contributions}

Despite substantial progress in both RSMA and CKM research, existing studies largely treat these two aspects in isolation. On the one hand, RSMA has been extensively investigated under imperfect CSIT conditions, with particular emphasis on small-scale fading uncertainty and delayed channel feedback, while typically assuming accurate large-scale channel knowledge \cite{10478577,11159579}. On the other hand, CKM-related studies primarily focus on the accuracy of the constructed maps, commonly evaluated using the normalized mean square error (NMSE) relative to actual channel realizations, and system-level evaluations typically consider single-user or conventional SDMA scenarios \cite{11153394,10272353}. To the best of our knowledge, the integration of CKM-assisted channel knowledge with RSMA-based multi-user framework has not yet been systematically studied. 
Hopefully, these two techniques are highly complementary and can facilitate synergistic performance in high-mobility multi-agent systems.

High-mobility autonomous driving systems require xURLLC services, making FBL transmissions prevail.
Although RSMA-assisted vehicular computation offloading has been studied in~\cite{10882982}, its application to high-mobility autonomous driving communications under finite-blocklength physical-layer constraints remains less explored. 
In particular, existing RSMA-assisted vehicular offloading studies mainly focus on task execution and computation efficiency, while the joint effects of short-packet transmission, CKM-assisted large-scale channel knowledge, outdated CSIT, and common/private stream allocation have not been systematically addressed.
In practice, rapidly time-varying channels make conventional pilot-based channel estimation inefficient, as the overhead becomes prohibitive under short-packet constraints. Alternatively, CKMs offer location-dependent prior CSI that can significantly reduce pilot overhead, and RSMA provides inherent robustness to CSIT imperfections. 
When CKM-based CSI is used without pilot measurements, residual errors in large-scale channel parameters, such as path loss and shadowing, may arise. Therefore, the combination of CKM-assisted CSI acquisition and RSMA’s interference management capability is particularly suitable for high-mobility short-packet communications, provided that both small- and large-scale CSIT inaccuracies are appropriately considered. The main contributions of this paper are summarized as follows:
\begin{itemize}
    \item[$\bullet$] 

    We propose a CKM-assisted RSMA framework for downlink autonomous driving systems in the finite blocklength regime, where a CKM provides location-dependent channel information to predict the achievable rates for both common and private transmissions. Based on these predictions and service requirements, the BS determines the rate-splitting ratios and power allocation levels elaborately, for both common and private messages.
    
    \item[$\bullet$] 
    We derive a closed-form lower bound on the ergodic achievable rate of RSMA, explicitly accounting for blocklength, BLER, and power allocation across common and private streams. In particular, the private stream bound is tightened by refining the inequality-based approximation using a closed-form expression for the Shannon-rate term.

    \item[$\bullet$] We formulate a min-rate optimization problem for RSMA under QoS constraints and propose a tractable approximation based on a closed-form expression with CKM-provided large-scale fading, and develop a suboptimal solution for the resulting problem.

    \item[$\bullet$] We validate the tightness of the derived bound through simulations. The impact of different large-scale channel maps on RSMA performance has been disclosed. The results demonstrate the importance of accurate CKM for FBL rate-splitting.

\end{itemize}

The rest of the paper is organized as follows. Section~\ref{sec:system} introduces the system model, and Section~\ref{sec:CLOSED-FORM} presents the closed-form ergodic rate analysis. In Section~\ref{problem}, we formulate the problem of maximizing the ergodic min-rate and leverage the closed-form expressions to optimize the allocation of global power and common rate splitting. Section~\ref{result} provides numerical results, and Section~\ref{conclusion} concludes the paper.

\textit{Notation:} Bold lowercase letters denote vectors. The operators $|\cdot|$ and $\|\cdot\|$ indicate set cardinality (or absolute value for scalars) and $l_2$-norm, respectively. $\mathbf{a}^T$ and $\mathbf{a}^H$ are the transpose and Hermitian transpose of $\mathbf{a}$. $\mathcal{CN}(0, \sigma^2)$ represents a circularly symmetric complex Gaussian variable with zero mean and variance $\sigma^2$. ${\mathbf{I}}_N$ is the $N \times N$ identity matrix. $\lfloor \cdot \rfloor$, $\lceil \cdot \rceil$, and $\lfloor \cdot \rceil$ denote floor, ceiling, and rounding to the nearest integer, respectively. $\ln(\cdot)$ is the natural logarithm. $X \sim \mathrm{Gamma}(D, \theta)$ has PDF $f_X(x) = \frac{x^{D-1} e^{-x/\theta}}{\Gamma(D)\theta^D}$ and CDF $F_X(x) = 1 - \frac{1}{\Gamma(D)} \Gamma(D, x/\theta)$, where the complete Gamma function is $\Gamma(D) = \int_0^\infty t^{D-1} e^{-t} {\rm{d}}t$, its derivative is $\Gamma'(D)$, and the upper incomplete Gamma function is $\Gamma(s,x) = \int_x^\infty t^{s-1} e^{-t} {\rm{d}}t$. The generalized exponential integral is ${E_v}(x) = \int_1^\infty t^{-v} e^{-tx} {\rm{d}}t$ for $x>0$, $v \in \mathbb{R}$. Finally, $\mathbb{E}{\cdot}$ denotes expectation.

%
\vspace{-2mm}
\section{System Model}
\label{sec:system}
\vspace{-1mm}
\setlength{\abovecaptionskip}{-0.05cm}
\setlength{\belowcaptionskip}{-0.2cm}
\label{Sec_SysMod}

\begin{figure}[t]
       \centering
       \includegraphics[width=1\linewidth]{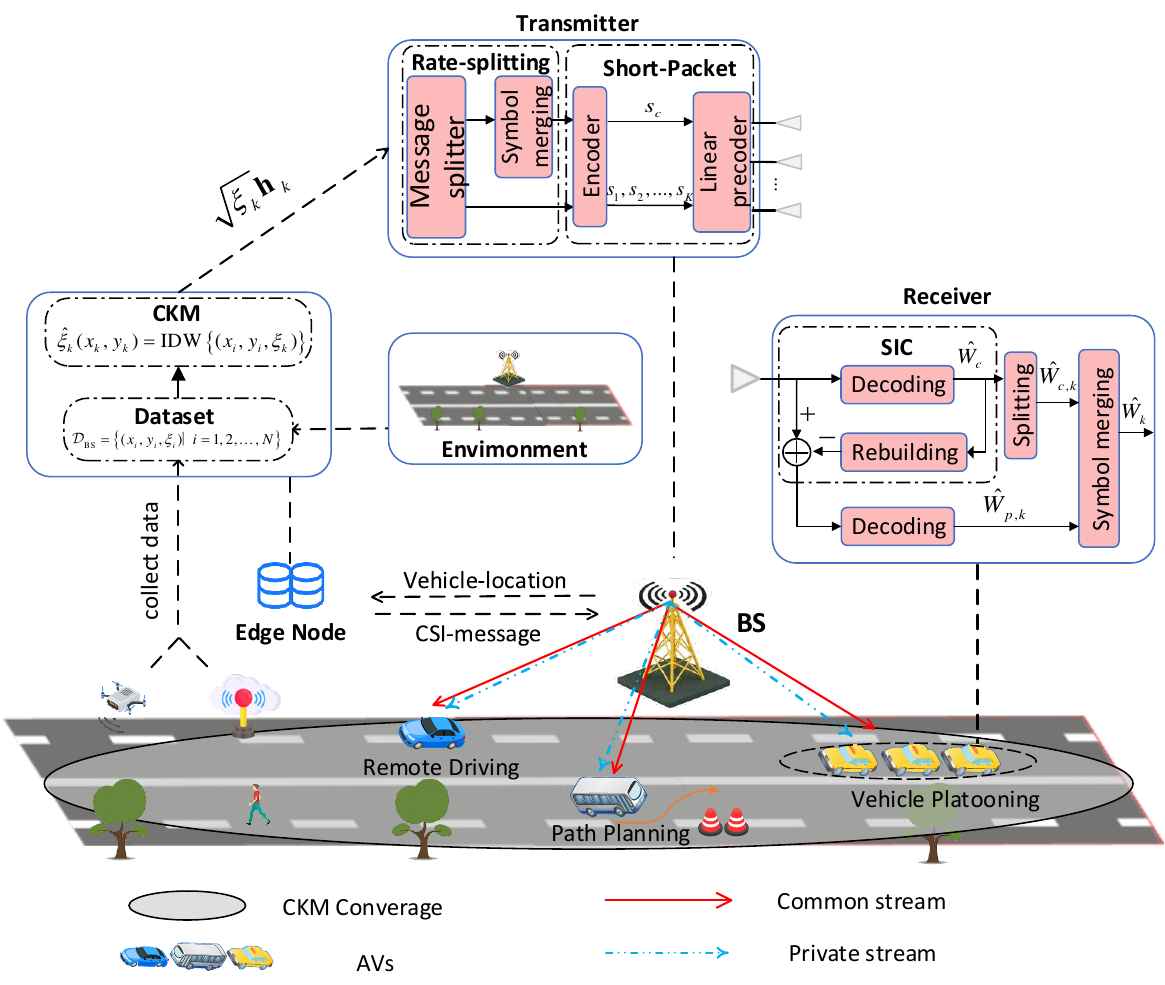}
       \caption{CKM-assisted RSMA short-packet transmission scheme for high-mobility autonomous driving.}
       \label{fig:system_model}
\end{figure}

Fig.~\ref{fig:system_model} illustrates a CKM-assisted high-mobility autonomous driving communication system. The system consists of roadside sensing units (RSUs), edge computing nodes, and connected autonomous vehicles. By collecting channel observations and location information from vehicles through RSUs, together with supplementary measurements acquired by aerial sensing agents in sparsely sampled regions, the edge node continuously constructs and updates a channel knowledge map. We consider a downlink MISO broadcast channel where a BS equipped with $N_t$ antennas serves $K$ single-antenna vehicles, indexed by the set $\mathcal{K}=\{1,2,\ldots,K\}$, with $N_t \ge K$. Based on the CKM-informed channel characteristics and service requirements, the BS performs rate-splitting and power allocation under a single-layer RSMA framework. Specifically, the CKM predicts the rate constraints associated with common and private transmissions. Based on these constraints and service requirements, the BS determines an appropriate rate-splitting strategy and partitions each vehicle's message $W_k$ into a common component $W_{c,k}$ and a private component $W_{p,k}$. The common components are jointly encoded into a common stream $s_c$, while the private components are independently encoded into user-specific private streams ${s_k}_{k\in\mathcal{K}}$ for unicast transmission. 
The transmitted signal is given by

\begin{equation}
\label{eq:transmitted_signal}
\mathbf{x}=\sqrt{P(1-t)}\mathbf{p}_{c}s_{c}+\sqrt{P}t\sum_{k\in\mathcal{K}}\sqrt{\mu_{k}}\mathbf{p}_{k}s_{k},
\end{equation}
where $\mathbf{p}_{c}\in\mathbb{C}^{N_t\times 1}$ and $\mathbf{p}_{k}\in\mathbb{C}^{N_t\times 1}$ are the precoders for the common stream and the $k$-th private stream, respectively, satisfying $\|\mathbf{p}_{c}\|^{2}=\|\mathbf{p}_{k}\|^{2}=1,\forall k\in\mathcal{K}$. Here $P$ represents the total transmit power, and $\mathbb{E}\left\{\mathbf{s}\mathbf{s}^{H}\right\}=\mathbf{I}$. The global power coefficient $0\leq t\leq 1$ governs the power division between common and private streams, while the private power distribution $0\leq\mu_{k}\leq 1$ with $\sum_{k=1}^{K}\mu_{k}=1$ specifies power allocation across private streams.

The received signal at vehicle-$k$ is
\begin{equation}
\label{eq:received_signal}
y_{k}=\sqrt{\xi_{k}}\mathbf{h}_{k}^{H}[m]\mathbf{x}+z_{k}, \quad k\in\mathcal{K},
\end{equation}
where $\mathbf{h}_{k}[m]\in\mathbb{C}^{N_t\times 1}$ denotes the small-scale fading, $z_{k}\sim\mathcal{CN}(0,\sigma^{2})$ is additive white Gaussian noise (AWGN), and $\xi_{k}$ represents large-scale fading. 
For analytical convenience, we normalize the noise by scaling the received signal:
\begin{equation}
y_k^{\text{scaled}} = \sqrt{\zeta_k}\mathbf{h}_k^H[m]\mathbf{x} + n_k, \quad k \in \mathcal{K},
\end{equation}
where $\zeta_k = \xi_k/\sigma^2$ and $n_k \sim \mathcal{CN}(0,1)$.

In the RSMA framework, each vehicle first decodes the common stream by treating all private streams as interference. The signal-to-interference-plus-noise ratio (SINR) for the common stream at vehicle-\( k \) can be expressed as:

\begin{align}
    \label{SINR_c}
    {\Gamma _{c,k}} = \frac{{{P}{(1-t)}{\zeta_{k}}{{\left| {{\mathbf{h}}_k^H[m]{{\mathbf{p}}_c}} \right|}^2}}}{{{P}{t}{\zeta_{k}}\sum_{j \in {\mathcal K}} {{\mu_j}{{\left| {{\mathbf{h}}_k^H[m]{{\mathbf{p}}_j}} \right|}^2} + 1} }}. 
\end{align}
Each vehicle implements SIC to decode the common stream $s_c$ and subtract it from the received signal $y_k$. 
To guarantee successful decoding of $s_c$ by all vehicles, the design focuses on the SINR of the user with the minimum SINR, i.e., $\min_{k\in\mathcal{K}} \Gamma_{c,k}$. For analytical tractability, perfect SIC is assumed. After successfully decoding and removing the common stream via SIC, each vehicle decodes its intended private stream. The SINR for the private stream at vehicle-\( k \) is given by:

\begin{align}
    \label{SINR_p}
    {\Gamma _{p,k}} = \frac{{{P}{t}{\zeta_{k}}{\mu_k}{{\left| {{\mathbf{h}}_k^H[m]{{\mathbf{p}}_k}} \right|}^2}}}{{{P}{t}{\zeta_{k}}\sum\nolimits_{j \in {\mathcal K},j \ne k} {{\mu_j}{{\left| {{\mathbf{h}}_k^H[m]{{\mathbf{p}}_j}} \right|}^2} + 1} }}. 
\end{align}
Following the decoding of both the common and private streams, vehicle $k$ reconstructs the original message by combining the decoded common component $\hat{W}_{c,k}$ with the decoded private component $\hat{W}_{p,k}$.
In the FBL regime, the Shannon capacity no longer accurately characterizes achievable rates. We therefore adopt the normal approximation \cite{5452208}, which yields the instantaneous rates of the common and private streams as
\begin{subequations}\label{eq:5}
\begin{align}
R_{c}^{(m)} &\approx 
C\!\left(\min_{k\in\mathcal{K}}\Gamma_{c,k}\right)
-\sqrt{\frac{V(\min_{k\in\mathcal{K}}\Gamma_{c,k})}{l_{c}}}\,
Q^{-1}(\beta_{c,k}), \label{eq:5a}\\
R_{p,k}^{(m)} &\approx 
C(\Gamma_{p,k})
-\sqrt{\frac{V(\Gamma_{p,k})}{l_{k}}}\,
Q^{-1}(\beta_{p,k}). \label{eq:5b}
\end{align}
\end{subequations}where $l_{c}$ and $l_{k}$ denote the blocklengths, $\beta_{c,k}$ and $\beta_{p,k}$ are the block error rates (BLERs), $C(x)=\log_{2}(1+x)$ is the Shannon capacity function, and $V(x) = (1 - (1 + x)^{-2})(\log_2e)^2$ is the channel dispersion. Therefore, the corresponding ergodic rates for the common and private streams are given by
$R_{c}=\mathbb{E}\!\left\{R_{c}^{(m)}\right\}$ and 
$R_{k}=\mathbb{E}\!\left\{R_{p,k}^{(m)}\right\}$, respectively.

\section{Channel Knowledge Construction}

This section introduces the channel modeling framework for CKM-assisted vehicular communications, capturing both large-scale and small-scale fading with practical imperfections. The transmitter relies on two sources of prior channel knowledge. Firstly, location-dependent large-scale fading information is obtained from a pre-constructed CKM based on estimated vehicle positions. These position estimates are assumed to be available at the base station via existing localization mechanisms, such as GNSS reports and network-based positioning \cite{electronics11193247}.
Secondly, delayed small-scale fading information from the previous transmission interval is assumed to be available at the transmitter. This prior knowledge can be obtained through low-overhead techniques, such as channel prediction or by exploiting side information, which require minimal or no pilot assistance.
By considering both sources of prior knowledge, the system evaluation captures the effects of errors in both large-scale and small-scale channel information on transmission performance. These information is used to guide power allocation and rate-splitting decisions.

\begin{figure}[t]\centerline{\includegraphics[width=0.40\textwidth,keepaspectratio]{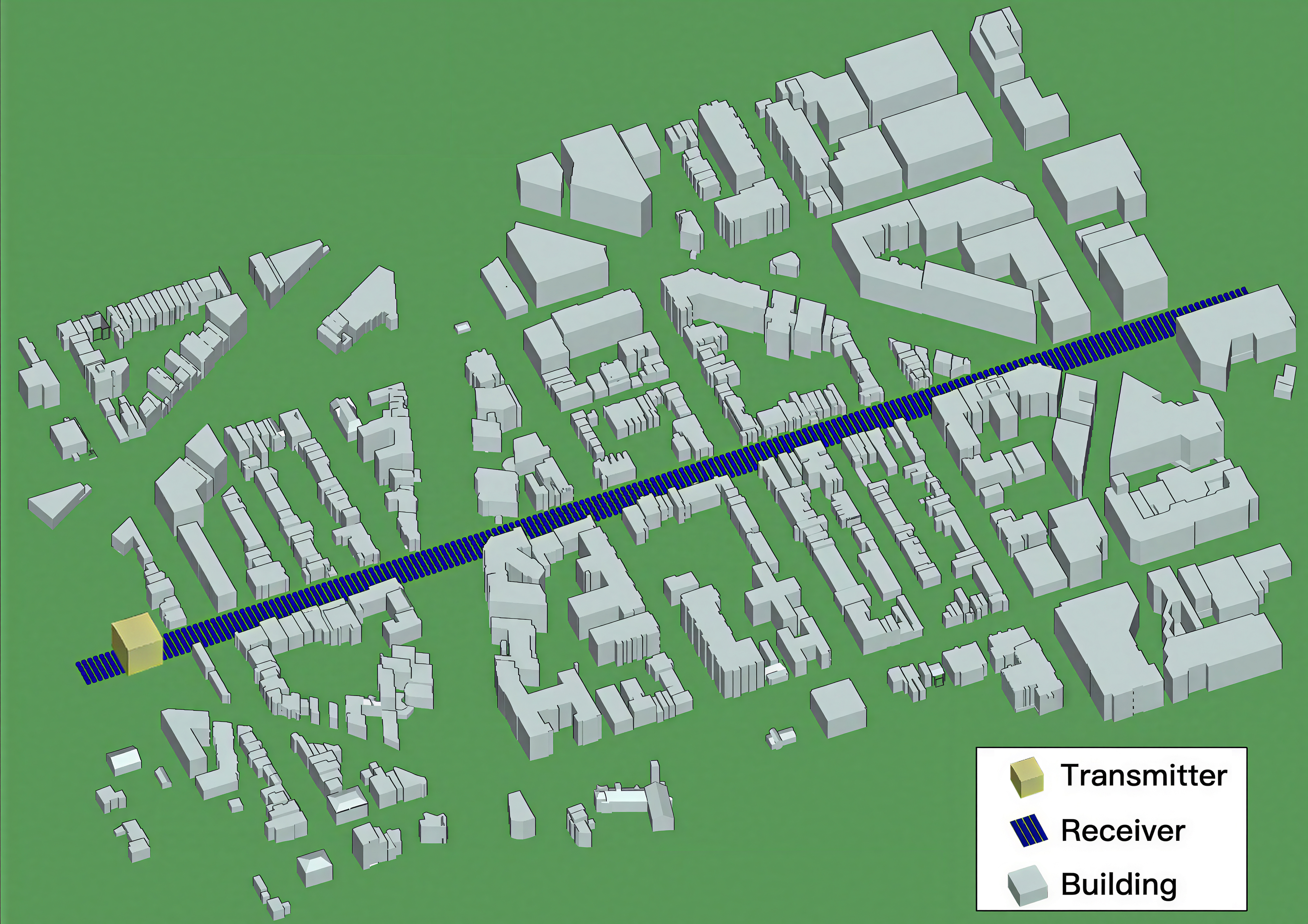}}
	\vspace{2mm}
	\caption {Urban road segment layout with sampled transmitter and receiver locations.}
	\label{road}
\end{figure}

\begin{figure}[t]
	\centering
	\includegraphics[width=0.45\textwidth]{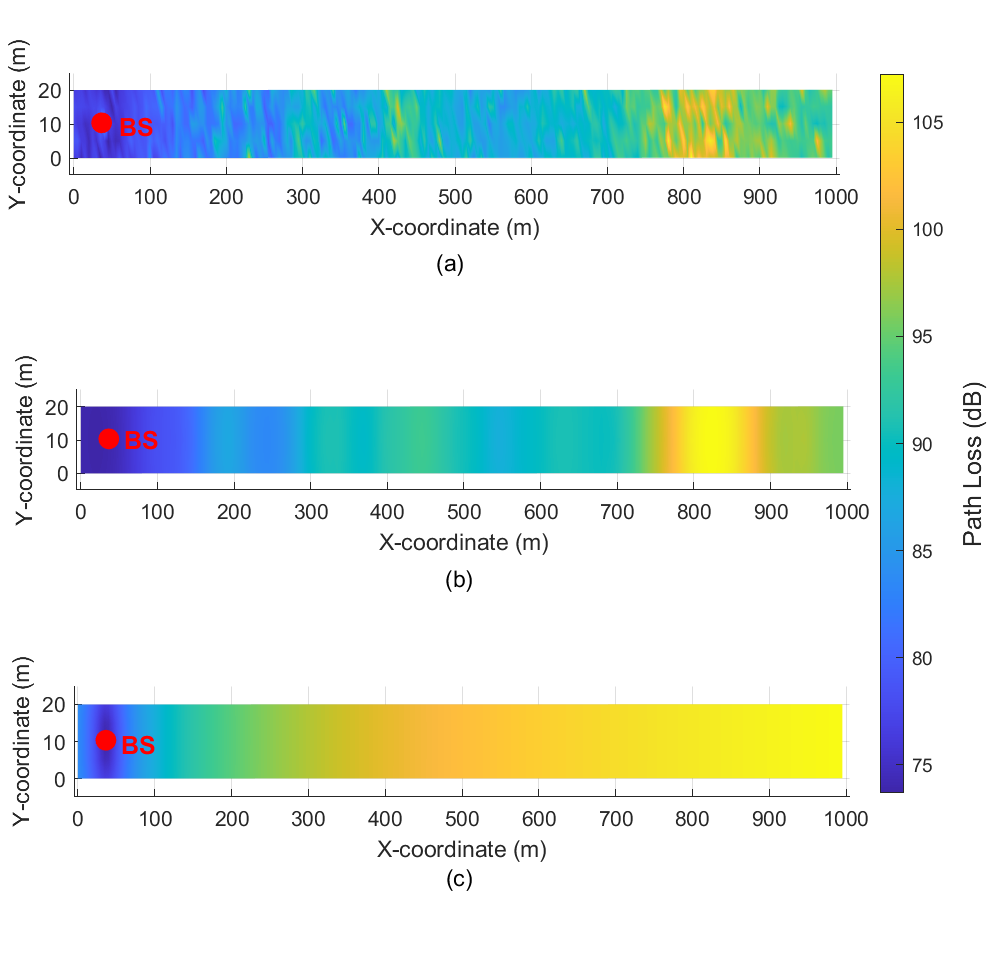} 
	\caption{Path-loss maps considered: (a) ground-truth CKM from dense 5~m sampling; (b) KNN-interpolated CKM from sparse 25~m measurements; (c) path-loss CKM.}
	\label{CKM_total}
	\vspace{3mm}
\end{figure}

\subsection{Large-Scale Channel Knowledge Construction}

To evaluate ergodic achievable rates in an urban vehicular environment and compare different CKM construction methods, a site-specific propagation scenario is modeled in Remcom Wireless InSite~\cite{remcom_wireless_2021}, as illustrated in Fig.~\ref{road}. The ray-tracing simulator generates environment-aware channel realizations that serve as a quasi-ground-truth reference for CKM evaluation. In practical deployments, such channel measurements are acquired from uplink pilot estimates collected
by autonomous veihcile in conjunction with positioning information provided by roadside sensing units. In this scenario, the yellow module represents the transmitter, the blue modules denote densely deployed sampled receivers along a straight $1000~\mathrm{m} \times 25~\mathrm{m}$ road segment, and the gray structures correspond to surrounding buildings. The presence of the LoS path ensures that conventional distance-based path-loss models, while approximate, remain reasonably representative of the overall large-scale attenuation. 

Based on the ray-tracing results, a high-resolution \emph{path-loss map} is generated by sampling the received power every $\Delta_1 = 5~\mathrm{m}$ along the road, corresponding to the dense receiver positions in Fig.~\ref{road}. This densely sampled map, shown in Fig.~\ref{CKM_total}(a), serves as the \emph{ground-truth CKM}, which can be represented as a function $L(x,y)$ mapping a spatial location $(x,y)$ to the corresponding large-scale path loss in dB. The path-loss values exhibit pronounced location-dependent variations, reflecting shadowing and site-specific propagation effects beyond a simple distance-dependent trend.

In practice, dense channel measurements are rarely available due to limited measurement density and localization errors. To model realistic CKM acquisition, we assume that the transmitter only has access to sparse path-loss measurements at locations spaced by $\Delta_2 = 25~\mathrm{m}$. CKMs can, in general, be constructed from spatially sampled channel measurements or sensing data using model-based approaches~\cite{9034493}, interpolation techniques~\cite{6333748,7817747}, or data-driven methods based on machine learning~\cite{9771802,7997333,9097850,9354041,9523765}. In this work, a continuous CKM over the entire road segment is reconstructed using a K-nearest neighbor (KNN) interpolation method, selected for its low computational complexity and analytical transparency. For each spatial location without direct measurement, the Euclidean distances $d_i$ to its five nearest sampled neighbors are computed, and the interpolation weights are defined as
\begin{equation}
w_i = \frac{\exp(-d_i^2 / \sigma)}{\sum_{j=1}^{5} \exp(-d_j^2 / \sigma)}, 
\quad 
\sigma = \frac{1}{5}\sum_{i=1}^{5} d_i^2.
\end{equation}
The resulting KNN-interpolated path-loss map, shown in Fig.~\ref{CKM_total}(b), captures the overall large-scale trend, including periodic variations and a gradual decay with distance.

As a representative model-based CKM, a conventional analytical map is constructed using a distance-dependent path-loss model, where the path loss at location $k$ is expressed as
\begin{equation}
\label{pathloss_model}
l(d_k) = p_0 + 10\alpha \lg(d_k),
\end{equation}
with $d_k$ denoting the BS--receiver distance. In our work, the parameters $p_0$ and $\alpha$ are calibrated using measured path losses at reference distances of $100~\mathrm{m}$ and $900~\mathrm{m}$ to align the analytical model with the simulated propagation scenario. The resulting path-loss CKM, shown in Fig.~\ref{CKM_total}(c), exhibits a smooth and monotonic decay with distance.

\subsection{Small-Scale Channel Knowledge and Imperfect CSIT Model}

To model small-scale fading and imperfect CSIT, we adopt the autoregressive fading model in~\cite{5723047}. The channel of vehicle~$k$ at time~$m$ is expressed as
\begin{equation}
\label{eq:channel_model}
\mathbf{h}_k[m] = \varepsilon \, \mathbf{h}_k[m-1] + \sqrt{1-\varepsilon^2} \, \mathbf{e}_k[m],
\end{equation}
where $\mathbf{e}_k[m]$ is a random component following spatially uncorrelated Rayleigh fading with i.i.d.\ $\mathcal{CN}(0,1)$ entries. The temporal correlation coefficient
\(
\varepsilon = J_0(2\pi f_{D,k}T)
\)
follows the Jakes model, with $T$ denoting the channel coherence interval and $f_{D,k} = v_k f_c / c$ the maximum Doppler frequency determined by vehicle speed $v_k$, carrier frequency $f_c$, and the speed of light $c$.

The previous channel realization $\mathbf{h}_k[m-1]$ is assumed to be available at the transmitter. In practice, it can be obtained with limited or no pilot overhead using CKMs \cite{10430216}, integrated sensing and communication \cite{10536135}, or autoregressive prediction based on historical measurements \cite{9210016}. In Monte Carlo simulations, $\mathbf{h}_k[m-1]$ is directly generated as an independent complex Gaussian random vector with identical second-order statistics, which provides a reasonable approximation for high-mobility urban vehicular environments~\cite{5723047}. In this model, the error from acquiring small-scale fading via CKMs or other methods is considered negligible relative to Doppler-induced channel variations, which is a reasonable approximation in high- or ultra-high-mobility scenarios.

\section{Closed-Form Ergodic Rate Analysis}
\label{sec:CLOSED-FORM}
This section presents the derivation of a closed-form lower bound on the ergodic rate for transmission streams under general antenna configurations, BLER constraints, and finite blocklength. To facilitate the subsequent analysis, the distributions and approximations of several frequently used random variables are first summarized. Based on these results, we first present the closed-form expression for the common-stream rate using established formulations. We then derive a tighter closed-form lower bound on the private-stream rate, providing an improved characterization at finite blocklength.

\subsection{Gamma Distribution for Precoding}

For the private streams, ZF precoding is employed. Since the ZF beamformer $\mathbf{p}_k$ is isotropically distributed and independent of the Gaussian error vector $\mathbf{e}_j^H[m]$ for all $j \in \mathcal{K}\setminus k$, the term $|\mathbf{e}_j^H[m]\mathbf{p}_k|^2$ follows an exponential distribution with unit mean, i.e., $|\mathbf{e}_j^H[m]\mathbf{p}_k|^2 \sim \mathrm{Gamma}(1,1)$.  
Because the transmitter only has access to the previous-slot CSI $\mathbf{h}_k[m-1]$, it follows that $\mathbf{h}_j^H[m-1]\mathbf{p}_k = 0$ for all $j \in \mathcal{K}\setminus k$, and the desired-signal component $|\mathbf{h}_k^H[m-1]\mathbf{p}_k|^2$ follows $\mathrm{Gamma}(N_t - K + 1, 1)$~\cite{6967804}. The common-stream precoder $\mathbf{p}_c$ is modeled as a stochastic beamformer uncorrelated with $\mathbf{h}_k[m-1]$ and $\mathbf{e}_k[m]$, resulting in $|\mathbf{h}_k^H[m]\mathbf{p}_c|^2 \sim \mathrm{Gamma}(1,1)$~\cite{7152864}. For simplification, each term $\left|\mathbf{h}_k^H[m]\mathbf{p}_j\right|^2$ with $j,k \in \mathcal{K}$ is approximated as
\begin{align}
\label{Approximation1}
\left|\mathbf{h}_k^H[m]\mathbf{p}_k\right|^2 &\approx \varepsilon^2 \left|\mathbf{h}_k^H[m-1]\mathbf{p}_k\right|^2 + (1-\varepsilon^2) \left|\mathbf{e}_k^H[m]\mathbf{p}_k\right|^2, \\
\label{Approximation2}
\left|\mathbf{h}_k^H[m]\mathbf{p}_j\right|^2 &\approx (1-\varepsilon^2) \left|\mathbf{e}_k^H[m]\mathbf{p}_j\right|^2, \quad j \neq k.
\end{align}
These approximations ignore the cross terms involving both $\mathbf{h}_k[m-1]$ and $\mathbf{e}_k[m]$ \cite{6967804}.

Since many involved random variables are Gamma distributed, we apply the approximation method in~\cite{6967804} for the sum \( Z=\sum_i A_i \) with \( A_i \sim \mathrm{Gamma}(D_{A_i},\theta_{A_i}) \).  
Under this method, \( Z \) is represented by an equivalent Gamma r.v. \( \widetilde{Z} \sim \mathrm{Gamma}(\widetilde D_Z,\widetilde\theta_Z) \), whose parameters are given by
\begin{align}
\label{gammajinsi}
\widetilde D_Z = 
\frac{\left(\sum_i D_{A_i}\theta_{A_i}\right)^2}
     {\sum_i D_{A_i}\theta_{A_i}^2},
\qquad
\widetilde\theta_Z =
\frac{\sum_i D_{A_i}\theta_{A_i}^2}
     {\sum_i D_{A_i}\theta_{A_i}}.
\end{align}

\subsection{Lower Bound for $R_c$}

The closed-form lower bound for the ergodic rate of the common stream $R_c$ is
\begin{equation}
\label{Rc_lower_bound_final}
\widehat{R}_c =
\frac{e^{\frac{C_1}{C_2}}}{\ln 2}\,\Psi(D)
-
\sqrt{
\frac{
V\!\Big\{\frac{e^{\frac{C_1}{C_2}}}{C_2}\,E_D\Big(\frac{C_1}{C_2}\Big) \Big\}
}{l_c}
} \, Q^{-1}(\beta_{c,k}),
\end{equation}
where the parameters are $C_1 = \sum_{k=1}^K \frac{1}{\zeta_k P(1-t)}$, $C_2 = \frac{\widetilde{\theta}\, t}{K(1-t)}$, and $D = \lfloor \widetilde{D}\, K \rceil$, with $\widetilde D = \frac{[\varepsilon^2(N_t+1) + (1-2\varepsilon^2) K]^2}{\varepsilon^4(N_t+1) + (1-2\varepsilon^2) K}$ and $\widetilde \theta = \frac{\varepsilon^4(N_t+1) + (1-2\varepsilon^2) K}{\varepsilon^2(N_t+1) + (1-2\varepsilon^2) K}$. The composite exponential integral $\Psi(D) = \frac{e^{C_1/C_2}}{(1-C_2)^D} E_1(C_1) - \sum_{i=1}^{D} \frac{E_i(C_1/C_2)}{(1-C_2)^{D+1-i}}$, and the generalized exponential integral $E_v(x) = \int_1^\infty e^{-t x} t^{-v}{\rm{d}}t$.

\begin{proof}
 The detailed derivation of this bound follows a similar procedure in Section III-B of~our previous work \cite{11159579} and is omitted here for brevity.
\end{proof}

\subsection{Lower Bound for \( R_k \)}
We derive the closed-form lower bound for the ergodic rate of the $k$-th private stream, $R_k$. 
Firstly, we rewrite 
\begin{align}
\mathbb{E}[C(\Gamma_{p,k})] 
&= \mathbb{E}\Big[\log_2\Big(1 + Pt \zeta_k \sum_{j\in\mathcal{K}} \mu_j |\mathbf{h}_k^H[m]\mathbf{p}_j|^2\Big)\Big] \nonumber\\
&\quad - \mathbb{E}\Big[\log_2\Big(1 + Pt \zeta_k \sum_{j\in\mathcal{K}, j\neq k} \mu_j |\mathbf{h}_k^H[m]\mathbf{p}_j|^2\Big)\Big]
\label{eq_ergodic_rate_decomposition}.
\end{align}
To obtain a concise closed-form approximation, we introduce the following lemmas.
\begin{Lemma}
	\label{Xkdejinsi}
	The r.v. $X_{k} = \sum\nolimits_{j \in {\mathcal K}} {{\mu_j}{{\left| {{\mathbf{h}}_k^H[m]{{\mathbf{p}}_j}} \right|}^2}}$ is approximated by the r.v. $\widetilde X_k$ with the distribution $Gamma({\widetilde D_{X_k}},{\widetilde \theta _{X_k}})$, where
	\begin{align}
		\label{dxkthxk}
		{\widetilde D_{X_k}} = \frac{(({N_t-K+1}){\varepsilon^2}{\mu_{k}}+1-{\varepsilon^2})^2}{(N_t-K+1)\varepsilon^4\mu_{k}^2+{(1-\varepsilon^2)^2}{\sum\nolimits_{j \in {\mathcal K}} {{\mu_j}^2} }},\hfill\\{\widetilde\theta_{X_k}}=\frac{{(N_t-K+1)\varepsilon^4\mu_{k}^2+{(1-\varepsilon^2)^2}{\sum\nolimits_{j \in {\mathcal K}} {{\mu_j}^2} }}}{({N_t-K+1}){\varepsilon^2}{\mu_{k}}+1-{\varepsilon^2}},
	\end{align}
\end{Lemma}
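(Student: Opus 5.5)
We use the approximations \eqref{Approximation1}--\eqref{Approximation2} to write $X_k$ as a sum of independent Gamma random variables. We then apply the moment-matching rule \eqref{gammajinsi}.

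The first step is to substitute the two approximations into $X_k=\sum_{j\in\mathcal K}\mu_j|\mathbf h_k^H[m]\mathbf p_j|^2$. This gives
\begin{align*}
X_k \approx \mu_k\varepsilon^2\left|\mathbf h_k^H[m-1]\mathbf p_k\right|^2+\sum_{j\in\mathcal K}\mu_j(1-\varepsilon^2)\left|\mathbf e_k^H[m]\mathbf p_j\right|^2 .
\end{align*}
The $j=k$ term contributes both pieces, and the $j\neq k$ terms contribute only the error part. We then identify each summand as a scaled Gamma variable, using the facts stated earlier in this section. First, $|\mathbf h_k^H[m-1]\mathbf p_k|^2\sim\mathrm{Gamma}(N_t-K+1,1)$, so $\mu_k\varepsilon^2|\mathbf h_k^H[m-1]\mathbf p_k|^2\sim\mathrm{Gamma}(N_t-K+1,\mu_k\varepsilon^2)$. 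Second, each $|\mathbf e_k^H[m]\mathbf p_j|^2\sim\mathrm{Gamma}(1,1)$, which gives $\mu_j(1-\varepsilon^2)|\mathbf e_k^H[m]\mathbf p_j|^2\sim\mathrm{Gamma}(1,\mu_j(1-\varepsilon^2))$. Here we use the scaling property of the Gamma distribution with shape $D$ and scale $\theta$.

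Next, we invoke \eqref{gammajinsi}. We need two sums over the summands: the sum of $D_{A_i}\theta_{A_i}$, which is the mean, and the sum of $D_{A_i}\theta_{A_i}^2$, which is the variance. The first is
\begin{align*}
\sum_i D_{A_i}\theta_{A_i}=(N_t-K+1)\varepsilon^2\mu_k+(1-\varepsilon^2)\sum_{j\in\mathcal K}\mu_j=(N_t-K+1)\varepsilon^2\mu_k+1-\varepsilon^2,
\end{align*}
where we use $\sum_j\mu_j=1$. The second is
\begin{align*}
\sum_i D_{A_i}\theta_{A_i}^2=(N_t-K+1)\varepsilon^4\mu_k^2+(1-\varepsilon^2)^2\sum_{j\in\mathcal K}\mu_j^2 .
\end{align*}
Squaring the first sum and dividing by the second yields $\widetilde D_{X_k}$. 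Taking their ratio in the other order yields $\widetilde\theta_{X_k}$. This reproduces \eqref{dxkthxk} exactly.

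The main obstacle is justifying independence of the summands, which the moment-matching rule needs. Once the cross terms are dropped, the error terms $|\mathbf e_k^H[m]\mathbf p_j|^2$ for different $j$ all involve the same vector $\mathbf e_k[m]$. They are still independent, because the ZF precoders $\mathbf p_j$ are orthonormal directions independent of $\mathbf e_k[m]$. Under that condition, the projections of an i.i.d.\ $\mathcal{CN}(0,\mathbf I)$ vector onto orthonormal directions are i.i.d.\ $\mathcal{CN}(0,1)$. If the ZF columns are not exactly orthogonal, this step holds only approximately. We would then follow \cite{6967804} and treat the terms as independent, which is consistent with the approximate nature of the lemma. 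The term involving $\mathbf h_k[m-1]$ is independent of the terms involving $\mathbf e_k[m]$ by the model \eqref{eq:channel_model}.
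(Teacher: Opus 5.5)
Your proposal is correct and follows the argument the paper defers to its cited reference: substitute \eqref{Approximation1}--\eqref{Approximation2}, identify the summands as $\mathrm{Gamma}(N_t-K+1,\mu_k\varepsilon^2)$ and $\mathrm{Gamma}(1,\mu_j(1-\varepsilon^2))$, then apply the moment-matching rule \eqref{gammajinsi} with $\sum_j\mu_j=1$. One correction: normalized ZF precoders are generally not mutually orthogonal, so treating the terms $|\mathbf{e}_k^H[m]\mathbf{p}_j|^2$ as independent is an extra approximation inherited from the cited works, not a fact, as your fallback remark already concedes.
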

\begin{Lemma}
	\label{Ykdejinsi}
	The r.v. $ Y_k= \sum\nolimits_{j \in {\mathcal K}\backslash k} {{\mu_j}{{\left| {{\mathbf{h}}_k^H[m]{{\mathbf{p}}_j}} \right|}^2}} $ is approximated by the r.v. $\widetilde Y_k$ with the distribution $Gamma({\widetilde D_{Y_k}},{\widetilde \theta _{Y_k}})$, where
\begin{align}
\label{d4o4}
{\widetilde D_{Y_k}} = \frac{{{{(1 - {\mu_k})}^2}}}{{\sum\nolimits_{j \in {\mathcal K}\backslash k} {{\mu_j}^2} }},\ {\widetilde \theta _{Y_k}} = \frac{{(1-\varepsilon^2)}{\sum\nolimits_{j \in {\mathcal K}\backslash k} {{\mu_j}^2} }}{{1 - {\mu_k}}},
\end{align}
\begin{proof}
The proofs of Lemmas~\ref{Xkdejinsi} and~\ref{Ykdejinsi} are similar to that of~\cite[Lemma~5]{11159579} and are omitted here for brevity.
\end{proof}

\end{Lemma}

\begin{Lemma}
	\label{log1plusZ}
	Let $\widetilde Z$ follow $\mathrm{Gamma}(\widetilde D_Z, \widetilde \theta_Z)$ with integer $\widetilde D_Z \ge 1$ and $\widetilde \theta_Z > 0$. Then
	\[
	\mathbb{E}[\log_2(1 + \widetilde Z)] = \sum_{i=0}^{\widetilde D_Z-1} \frac{e^{1/\widetilde \theta_Z} E_{i+1}\left( \frac{1}{\widetilde \theta_Z} \right)}{\ln 2},
	\]
	where $E_m(x) = \int_1^\infty t^{-m} e^{-xt} {\rm{d}}t$ is the exponential integral function.
	\begin{proof}
    The detailed proof is deferred to Appendix~A for brevity, as this lemma is a fundamental step in the derivations.
	\end{proof}
\end{Lemma}

\begin{Definition}
\label{E_log2_gamma_approximation}
For a Gamma-distributed random variable 
$\widetilde Z\sim\mathrm{Gamma}(\widetilde D_Z,\widetilde\theta_Z)$
with a possibly non-integer shape parameter $\widetilde D_Z>0$, 
we approximate 
\(
\mathbb{E}[\log_2(1+\widetilde Z)] \approx \Phi(\widetilde D_Z,\widetilde\theta_Z),
\)
where
\begin{align}
\Phi(\widetilde D_Z,\widetilde\theta_Z) \triangleq\;&
w_Z \sum_{i=0}^{\lfloor \widetilde D_Z \rfloor - 1} 
    \frac{e^{1/\widetilde\theta_Z} E_{i+1}(1/\widetilde\theta_Z)}{\ln 2} 
 \nonumber\\&
+ (1-w_Z)\sum_{i=0}^{\lceil \widetilde D_Z \rceil - 1} 
    \frac{e^{1/\widetilde\theta_Z} E_{i+1}(1/\widetilde\theta_Z)}{\ln 2},
\end{align}
and the weight is given by \(
w_Z = \lceil \widetilde D_Z \rceil - \widetilde D_Z.
\)

This approximation uses the exact integer-case expression in 
Lemma~\ref{log1plusZ} at the lower and upper integer bounds 
$\lfloor \widetilde D_Z \rfloor$ and $\lceil \widetilde D_Z \rceil$, 
and interpolates between them by distance-weighted averaging.
\end{Definition}

\begin{Lemma}
\label{lem_Epk_approx}
The expectation of $\Gamma_{p,k}$ is approximated by
\[
\mathbb{E}[\Gamma_{p,k}] \approx \mathbb{E}[\widetilde{\Gamma}_{p,k}] = \frac{e^{C_{k_1}/C_{k_2}}}{C_{k_2}} E_{[K-1]}\Big(\frac{C_{k_1}}{C_{k_2}}\Big),
\]
where
$C_{k_1} = \frac{1}{P t \zeta_k \mu_k \widetilde{D}_{M_k} \widetilde{\theta}_{M_k}}$, $
C_{k_2} = \frac{(1-\mu_k)(1-\varepsilon^2)}{(K-1) \mu_k \widetilde{D}_{M_k} \widetilde{\theta}_{M_k}},$
$\widetilde{D}_{M_k} = \frac{[(N_t - K + 1)\varepsilon^2 + (1-\varepsilon^2)]^2}{(N_t - K + 1)\varepsilon^4 + (1-\varepsilon^2)^2}$, $
\widetilde{\theta}_{M_k} = \frac{(N_t - K + 1)\varepsilon^4 + (1-\varepsilon^2)^2}{(N_t - K + 1)\varepsilon^2 + (1-\varepsilon^2)}.$

\end{Lemma}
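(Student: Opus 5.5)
We write $\Gamma_{p,k}=M_k/(c_k+W_k)$. The signal term $M_k=\mu_k|\mathbf{h}_k^H[m]\mathbf{p}_k|^2$ is approximated by \eqref{Approximation1} as $\mu_k\big[\varepsilon^2 A+(1-\varepsilon^2)B\big]$, where $A\sim\mathrm{Gamma}(N_t-K+1,1)$ and $B\sim\mathrm{Gamma}(1,1)$ are independent. The interference-plus-noise term is $c_k+W_k$ with $c_k=1/(Pt\zeta_k)$ and $W_k=Y_k=\sum_{j\neq k}\mu_j|\mathbf{h}_k^H[m]\mathbf{p}_j|^2$.

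\textbf{Step 1: moment-match the signal.} Apply \eqref{gammajinsi} to $\varepsilon^2A+(1-\varepsilon^2)B$. This gives a single $\mathrm{Gamma}(\widetilde D_{M_k},\widetilde\theta_{M_k})$ with the stated parameters; one checks that mean $=(N_t-K+1)\varepsilon^2+1-\varepsilon^2$ and $\sum D\theta^2=(N_t-K+1)\varepsilon^4+(1-\varepsilon^2)^2$. Since $M_k$ is independent of the interference, $\mathbb{E}[\Gamma_{p,k}]=\mathbb{E}[M_k]\,\mathbb{E}[1/(c_k+W_k)]$, and $\mathbb{E}[M_k]=\mu_k\widetilde D_{M_k}\widetilde\theta_{M_k}$. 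This factorization is what produces the common factor $\mu_k\widetilde D_{M_k}\widetilde\theta_{M_k}$ in the denominators of both $C_{k_1}$ and $C_{k_2}$.

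\textbf{Step 2: model the interference.} Lemma~\ref{Ykdejinsi} would give a non-integer shape in general, but the target has integer order $K-1$ and scale $(1-\mu_k)(1-\varepsilon^2)/(K-1)$. I will therefore replace $W_k$ by a $\mathrm{Gamma}(K-1,\theta)$ variable with matched mean. The mean of $Y_k$ is $(1-\varepsilon^2)(1-\mu_k)$, so $\theta=(1-\varepsilon^2)(1-\mu_k)/(K-1)$. This corresponds to approximating the private powers of the other users as equal, $\mu_j=(1-\mu_k)/(K-1)$, which makes $Y_k$ an exact Gamma sum of $K-1$ exponentials. This reduction is the main obstacle and the point requiring justification. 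I would argue it via the equal-power case, where the shape in Lemma~\ref{Ykdejinsi} equals $K-1$ exactly. More generally, I would present it as an integer-shape surrogate consistent with the later rounding $D=\lfloor\widetilde DK\rceil$ used for $\widehat R_c$.

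\textbf{Step 3: evaluate the expectation in closed form.} Compute
\[
\mathbb{E}\Big[\frac{1}{c_k+W}\Big]=\int_0^\infty \frac{w^{K-2}e^{-w/\theta}}{\Gamma(K-1)\theta^{K-1}(c_k+w)}\,{\rm d}w .
\]
Substitute $w=c_k(u-1)$ and divide numerator and denominator by $\mu_k\widetilde D_{M_k}\widetilde\theta_{M_k}$, so that $c_k\to C_{k_1}$ and $\theta\to C_{k_2}$. The integral then becomes $e^{C_{k_1}/C_{k_2}}/C_{k_2}$ times an integral of the form $\int_1^\infty (u-1)^{K-2}u^{-1}e^{-uC_{k_1}/C_{k_2}}\,{\rm d}u$, suitably normalized. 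This is the weighted/generalized exponential-integral quantity the paper denotes $E_{[K-1]}(\cdot)$. Equivalently, expanding $(u-1)^{K-2}$ binomially gives a finite combination of $E_{1-i}$ terms, which is the same closed form used for the common stream in \eqref{Rc_lower_bound_final}.

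Multiplying by $\mathbb{E}[M_k]$ yields the claimed $\mathbb{E}[\widetilde\Gamma_{p,k}]$. Routine checks remain: the normalization by $\Gamma(K-1)$, and the degenerate case $K=1$.
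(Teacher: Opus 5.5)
Your overall route is the right one and is consistent with what the paper (which only points to Lemma~6 of \cite{11159579}) evidently does. You use independence to pull out $\mathbb{E}[M_k]=\mu_k\widetilde D_{M_k}\widetilde\theta_{M_k}$. You model the residual interference as $\mathrm{Gamma}\big(K-1,(1-\mu_k)(1-\varepsilon^2)/(K-1)\big)$, which is exact in shape when the other users share power equally, in particular for $\mu_k=1/K$ as used in Section~\ref{problem}. You then rescale so the noise term becomes $C_{k_1}$ and the scale becomes $C_{k_2}$.

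The gap is Step~3, which is where the closed form actually lives. $E_{[K-1]}$ is not a new ``weighted'' function. It is the paper's ordinary generalized exponential integral $E_{K-1}(x)=\int_1^\infty t^{-(K-1)}e^{-xt}\,{\rm d}t$. Your substitution gives $\frac{e^{x}}{C_{k_2}}\cdot\frac{x^{K-2}}{(K-2)!}\int_1^\infty (u-1)^{K-2}u^{-1}e^{-xu}\,{\rm d}u$ with $x=C_{k_1}/C_{k_2}$. The content of the lemma is exactly that the last two factors equal $E_{K-1}(x)$; writing ``suitably normalized'' and relabeling assumes this rather than shows it. Your fallback does not rescue it. The binomial expansion produces the alternating sum $\sum_{i=0}^{K-2}\binom{K-2}{i}(-1)^{K-2-i}E_{1-i}(x)$, which is not the single $E_{K-1}$ in the statement. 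Nor is it ``the same closed form'' as in \eqref{Rc_lower_bound_final}: there the argument of $V$ is again a single $E_D$, while $\Psi(D)$ comes from the Shannon term.

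The missing step is short if you use the Laplace representation, as in Appendix~A, instead of the substitution $w=c_k(u-1)$. Write $\frac{1}{c+w}=\int_0^\infty e^{-s(c+w)}{\rm d}s$, exchange integrals, use $\mathbb{E}[e^{-sW}]=(1+s\theta)^{-(K-1)}$, and substitute $v=1+s\theta$:
\[
\mathbb{E}\Big[\frac{1}{c+W}\Big]=\int_0^\infty e^{-sc}(1+s\theta)^{-(K-1)}\,{\rm d}s=\frac{e^{c/\theta}}{\theta}\,E_{K-1}\Big(\frac{c}{\theta}\Big).
\]
With $c=C_{k_1}$ and $\theta=C_{k_2}$ this is the stated formula. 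Equivalently, in your form, write $u^{-1}=\int_0^\infty e^{-us}{\rm d}s$ and integrate out $u$ to get $(K-2)!\,e^{-(x+s)}(x+s)^{-(K-1)}$. Then multiply by $x^{K-2}/(K-2)!$, integrate over $s$, and substitute $x+s=xv$. This computation never uses integrality of the shape. So the equal-power reduction in Step~2 is needed only to match the index $K-1$ and the scale in $C_{k_2}$, not to make the integral tractable.
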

\begin{proof}
The proof of Lemmas~\ref{lem_Epk_approx} is similar to Lemma~6 in~\cite{11159579}, and is omitted here due to page limitations.
\end{proof}

\begin{Theorem}
The lower bound \( \widehat{R}_k \) for the ergodic rate of the $k$-th private stream \( R_k \) is given by
\begin{align}
\label{R_k_bound_Phi}
\nonumber 
\widehat{R}_k(t,\boldsymbol{\mu})
&= 
\Phi(\widetilde D_{X_k},\alpha_k)
- 
\Phi(\widetilde D_{Y_k},\eta_k)
\\[1mm]
&\quad
+ \sqrt{
	\frac{
		V\!\left(
		\frac{e^{C_{k_1}/C_{k_2}}}{C_{k_2}}
		E_{[K-1]}\!\left(
		\frac{C_{k_1}}{C_{k_2}}
		\right)
		\right)
	}{l_k}
}\,
Q^{-1}(\beta_{p,k}),
\end{align}
where 
\(\alpha_k = Pt \, \zeta_k \, \widetilde\theta_{X_k}\) and 
\(\eta_k = Pt \, \zeta_k \, \widetilde\theta_{Y_k}\), 
with \(\widetilde D_{X_k}\) and \(\widetilde \theta_{X_k}\) given in Lemma~\ref{Xkdejinsi}, \(\widetilde D_{Y_k}\) and \(\widetilde \theta_{Y_k}\) given in Lemma~\ref{Ykdejinsi}; 
\(C_{k_1}\) and \(C_{k_2}\) are defined in Lemma~\ref{lem_Epk_approx}; 
and \(\Phi(\cdot,\cdot)\) is defined in Definition~\ref{E_log2_gamma_approximation}.

\end{Theorem}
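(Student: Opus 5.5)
The plan is to treat the two terms of the normal approximation \eqref{eq:5b} separately, using linearity of expectation:
\[
R_k=\mathbb{E}\big[C(\Gamma_{p,k})\big]-\frac{Q^{-1}(\beta_{p,k})}{\sqrt{l_k}}\,\mathbb{E}\Big[\sqrt{V(\Gamma_{p,k})}\Big].
\]
The Shannon term is handled in closed form. The dispersion term is handled by a Jensen-type inequality, which is the only place where an actual bound (rather than an approximation) enters.

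For the Shannon term, I start from the decomposition \eqref{eq_ergodic_rate_decomposition}. This writes $\mathbb{E}[C(\Gamma_{p,k})]=\mathbb{E}[\log_2(1+Pt\zeta_k X_k)]-\mathbb{E}[\log_2(1+Pt\zeta_k Y_k)]$, with $X_k$ and $Y_k$ as in Lemmas~\ref{Xkdejinsi} and~\ref{Ykdejinsi}. I then replace $X_k$ and $Y_k$ by their Gamma surrogates $\widetilde X_k$ and $\widetilde Y_k$. Gamma scaling gives $Pt\zeta_k\widetilde X_k\sim\mathrm{Gamma}(\widetilde D_{X_k},\alpha_k)$ and $Pt\zeta_k\widetilde Y_k\sim\mathrm{Gamma}(\widetilde D_{Y_k},\eta_k)$. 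Applying Lemma~\ref{log1plusZ} when the shape is an integer, and the interpolation of Definition~\ref{E_log2_gamma_approximation} otherwise, yields exactly $\Phi(\widetilde D_{X_k},\alpha_k)-\Phi(\widetilde D_{Y_k},\eta_k)$. This step is exact in the expectation and loses nothing beyond the moment-matching and shape-interpolation approximations. That is what makes it tighter than the earlier inequality-based treatment of the Shannon term.

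For the dispersion term, I would show that $x\mapsto\sqrt{V(x)}=\log_2 e\,\sqrt{1-(1+x)^{-2}}$ is concave on $x\ge 0$. This amounts to checking the sign of the second derivative of $g(u)=\sqrt{1-u^{-2}}$ with $u=1+x$. Writing $g=\sqrt{h}$ with $h=1-u^{-2}$, we have $g''\propto 2hh''-(h')^2$, where $h'=2u^{-3}$ and $h''=-6u^{-4}$. This gives $g''\propto -12u^{-4}(1-u^{-2})-4u^{-6}<0$. Jensen's inequality then gives $\mathbb{E}[\sqrt{V(\Gamma_{p,k})}]\le\sqrt{V(\mathbb{E}[\Gamma_{p,k}])}$. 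Since the factor $Q^{-1}(\beta_{p,k})$ is positive for $\beta_{p,k}<1/2$, the subtracted term is bounded, and $R_k$ is lower bounded by the Shannon part minus $\sqrt{V(\mathbb{E}[\Gamma_{p,k}])/l_k}\,Q^{-1}(\beta_{p,k})$. Monotonicity of $V$ lets me substitute the approximation of $\mathbb{E}[\Gamma_{p,k}]$ from Lemma~\ref{lem_Epk_approx}, producing the $V(\cdot)$ argument in \eqref{R_k_bound_Phi}.

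The sign of this last term needs care. The bound as derived carries a minus sign in front of the square root. The "$+$" displayed in \eqref{R_k_bound_Phi} must therefore be read with the paper's sign convention for $Q^{-1}$, or as a typographical slip. I would state this explicitly when assembling the pieces.

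The main obstacle is making the word ``lower bound'' honest. Only the dispersion step is a true inequality; the Shannon term and $\mathbb{E}[\Gamma_{p,k}]$ rely on moment-matched Gamma surrogates and the interpolation in $\Phi$. I would therefore present $\widehat R_k$ as a lower bound on the surrogate-model ergodic rate. Tightness would then be argued from the fact that $\Phi$ is exact at integer shapes, and checked numerically.
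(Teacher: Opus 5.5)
Your proposal follows the paper's proof essentially step for step. The paper uses linearity of expectation, then the Gamma surrogates for $X_k$ and $Y_k$ scaled by $Pt\zeta_k$ and evaluated through $\Phi$, then Jensen's inequality on the concave $\sqrt{V(\cdot)}$, and finally substitutes the approximation of $\mathbb{E}[\Gamma_{p,k}]$ from Lemma~\ref{lem_Epk_approx}. That last substitution is an approximation rather than a bound-preserving step in both versions: monotonicity of $V$ does not help unless you know which way the error goes. Your sign remark is correct, since the paper's own derivation ends with $-\sqrt{V(\mathbb{E}[\Gamma_{p,k}])/l_k}\,Q^{-1}(\beta_{p,k})$, so the ``$+$'' in the displayed statement is a typo.
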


\begin{proof}
We first apply the linearity of expectation to rewrite \( R_k \) as
\begin{align}
\nonumber R_k &= \mathbb{E}\left[ \log_2\left( 1+ Pt\, \zeta_{k} X_k \right) \right] 
- \mathbb{E}\left[ \log_2\left( 1+ Pt\, \zeta_{k} Y_k \right) \right] \\
& \quad - \mathbb{E}\left[ \sqrt{\frac{V(\Gamma_{p,k})}{l_k}} Q^{-1}(\beta_{p,k}) \right].
\end{align}

By approximating 
\(
X_k \approx \widetilde X_k \sim \mathrm{Gamma}(\widetilde D_{X_k}, \widetilde \theta_{X_k})
\) and 
\(
Y_k \approx \widetilde Y_k \sim \mathrm{Gamma}(\widetilde D_{Y_k}, \widetilde \theta_{Y_k})
\),
we define the scaled variables
\[
\alpha_k \triangleq Pt \zeta_k \, \widetilde \theta_{X_k}, \quad 
\eta_k \triangleq Pt \zeta_k \, \widetilde \theta_{Y_k},
\]
which corresponds to multiplying the original Gamma variables by the constant \(Pt \zeta_k\). By applying the weighted integer-bound approximation in Definition~\ref{E_log2_gamma_approximation}.
Therefore,
\begin{align}
R_k &\approx \Phi(\widetilde D_{X_k},\alpha_k) - \Phi(\widetilde D_{Y_k},\eta_k) 
- \mathbb{E}\left[ \sqrt{\frac{V(\Gamma_{p,k})}{l_k}} Q^{-1}(\beta_{p,k}) \right].
\end{align}

Since \(\sqrt{V(\cdot)}\) is concave, applying Jensen's inequality gives the lower bound
\begin{align}
R_k &\ge \Phi(\widetilde D_{X_k},\alpha_k) - \Phi(\widetilde D_{Y_k},\eta_k)
- \sqrt{\frac{V(\mathbb{E}[\Gamma_{p,k}])}{l_k}}\, Q^{-1}(\beta_{p,k}).
\end{align}

Finally, using Lemma~\ref{lem_Epk_approx} to approximate \(\mathbb{E}[\Gamma_{p,k}] \approx \mathbb{E}[\widetilde{\Gamma}_{p,k}]\), we obtain the lower bound \(\widehat{R}_k\) as stated in \eqref{R_k_bound_Phi}.
\end{proof}

\vspace{-2mm}
\section{Problem Formulation and Solution}
\label{problem}

This section focuses on maximizing the minimum achievable user rate in RSMA-based FBL transmission. To simplify the problem and enable global optimization, the private-stream powers are equally allocated, i.e., $\mu_k = 1/K$. The optimization then jointly considers the global power coefficient and the fractions of the common-stream rate allocated to each user, formulated as

\begin{subequations}\label{eqn:problem_p0}
\begin{align}
\mathcal{P}_1: &\ \max_{t,\, \mathbf{c}} \ \min_{k \in \mathcal{K}} \big(C_k + R_k(\zeta_k)\big) \\[1mm]
\text{s.t.}\;& \sum_{k \in \mathcal{K}} C_k = R_c(\zeta_1,\dots,\zeta_K), \\
& C_k \ge 0, \quad \forall k \in \mathcal{K}, \\
& C_k = R_c\, c_k, \quad \forall k \in \mathcal{K}.
\end{align}
\end{subequations}
where $\mathbf{c} = [c_1, \dots, c_K]^T$ denotes the fractions of the common-stream rate for each user, and $t$ is the global power coefficient. 
Here, $\zeta_k$ represents the true large-scale channel power gain of user~$k$, which is obtained from the ground-truth CKM via
\begin{equation}
\zeta_k = \frac{10^{-L(x_k,y_k)/10}}{\sigma^2},
\end{equation}
where $L(x_k,y_k)$ is the path loss at the user location $(x_k,y_k)$ and $\sigma^2$ is the receiver noise power. 

In the considered model, the base station obtains estimates of the large-scale
channel gains via a CKM constructed from the sensed vehicle
positions. The true gains $\zeta_k$ are not directly available, and the
transmitter instead relies on the CKM-inferred values \(\widehat{\zeta}_k\) for
rate allocation. Based on these estimates, we define a CKM-assisted tractable
problem \(\mathcal{P}_2\), in which the ergodic rates are approximated using
the derived closed-form bounds $\widehat{R}_c$ in \eqref{Rc_lower_bound_final} and
$\widehat{R}_k$ in \eqref{R_k_bound_Phi}, avoiding computationally expensive
Monte Carlo simulations:
\begin{subequations}\label{eqn:problem_p2}
\begin{align}
\mathcal{P}_2: &\ \max_{t,\, \mathbf{c}} \ \min_{k \in \mathcal{K}} 
\big(C_k + \widehat{R}_k(\widehat{\zeta}_k)\big) \\[1mm]
\text{s.t.}\;& \sum_{k \in \mathcal{K}} C_k = \widehat{R}_c(\widehat{\zeta}_1, \dots, \widehat{\zeta}_K), \\
& C_k \ge 0, \quad \forall k \in \mathcal{K}, \\
& C_k = \widehat{R}_c\, c_k, \quad \forall k \in \mathcal{K}.
\end{align}
\end{subequations}

To solve $\mathcal{P}_2$, we perform an exhaustive search over the scalar
variable~$t$, which guarantees that the global optimum is found. For each
candidate $t$, the common-stream rate is optimally allocated among users using a
water-filling procedure as follows. The private-stream rates are first sorted
in ascending order, $\widehat{R}_{(1)} \le \dots \le \widehat{R}_{(K)}$, and for
each $m = 1,2,\dots,K$, the corresponding water level is computed as
\[
s_m = \frac{\widehat{R}_c(t) + \sum_{i=1}^{m} \widehat{R}_{(i)}(t)}{m}.
\]
The valid $m$ is chosen as the first index satisfying
$\widehat{R}_{(m)} \le s_m \le \widehat{R}_{(m+1)}$, ensuring that no user's
allocated rate falls below its private-stream rate. The common-stream rate for
each user is then
$C_k = \max\{0, s_m - \widehat{R}_k(t)\}$, with fraction
$c_k = C_k / \widehat{R}_c(t)$, guaranteeing $\sum_{k=1}^K c_k = 1$.

The resulting minimum achievable user rate for this candidate $t$ is
$s^*(t) = \min_k \big(C_k + \widehat{R}_k(t)\big)$. After evaluating all
candidate $t$ values, the optimal global power coefficient, and rate allocation
are obtained as $t^* = \arg\max_t s^*(t)$ and $\mathbf{c}^* = \mathbf{c}(t^*)$,
which together provide an efficient CKM-assisted approximation to the original
problem~$\mathcal{P}_1$.

The computational complexity of the proposed algorithm is analyzed with respect to the number of users $K$ and the number of search points $M$ employed in the exhaustive search over $t$. Assuming that the generalized exponential integral $E_v(\cdot)$ can be evaluated with constant complexity, the evaluations of $\widehat{R}_c$ and $\{\widehat{R}_k\}_{k=1}^{K}$ require $\mathcal{O}(K^2)$ operations for a given $t$. Since the water-filling procedure only incurs $\mathcal{O}(K\log_2 K)$ complexity, the complexity per search point is dominated by the rate evaluations and scales as $\mathcal{O}(K^2)$. Therefore, the overall computational complexity of the proposed algorithm is $\mathcal{O}(MK^2)$.

\vspace{-2mm}
\section{Simulation Results}
\label{result}

In this section, simulations are conducted to evaluate the accuracy of the proposed bounds and the improvement in ergodic performance. Unless otherwise specified, the system parameters are configured as follows. The BS is equipped with \( N_t = 8 \) transmit antennas and simultaneously serves \( K = 4 \) single-antenna vehicles. The total transmit power is \( P = 20~\text{dBm} \). The vehicle speed is \( v_k = 180~\text{km/h} \). The carrier frequency is \( f_c = 5.9~\text{GHz} \)~\cite{FCC2020}, with a bandwidth of \( B = 10~\text{MHz} \). The noise power density is \( \sigma_0^2 = -174~\text{dBm/Hz} \), resulting in a total noise power of \( \sigma^2 = \sigma_0^2 B \). The CSI acquisition delay is \( T = 0.2~\text{ms} \), and blocklengths are set to \( l_c = l_k = 300 \). The block error rate for all data streams is \( \beta_{c,k} = \beta_{p,k} = 10^{-6} \)~\cite{9390169}.

\begin{figure}[t]
    \centering
    \begin{subfigure}[t]{0.45\textwidth}
        \centering
        \includegraphics[width=\linewidth]{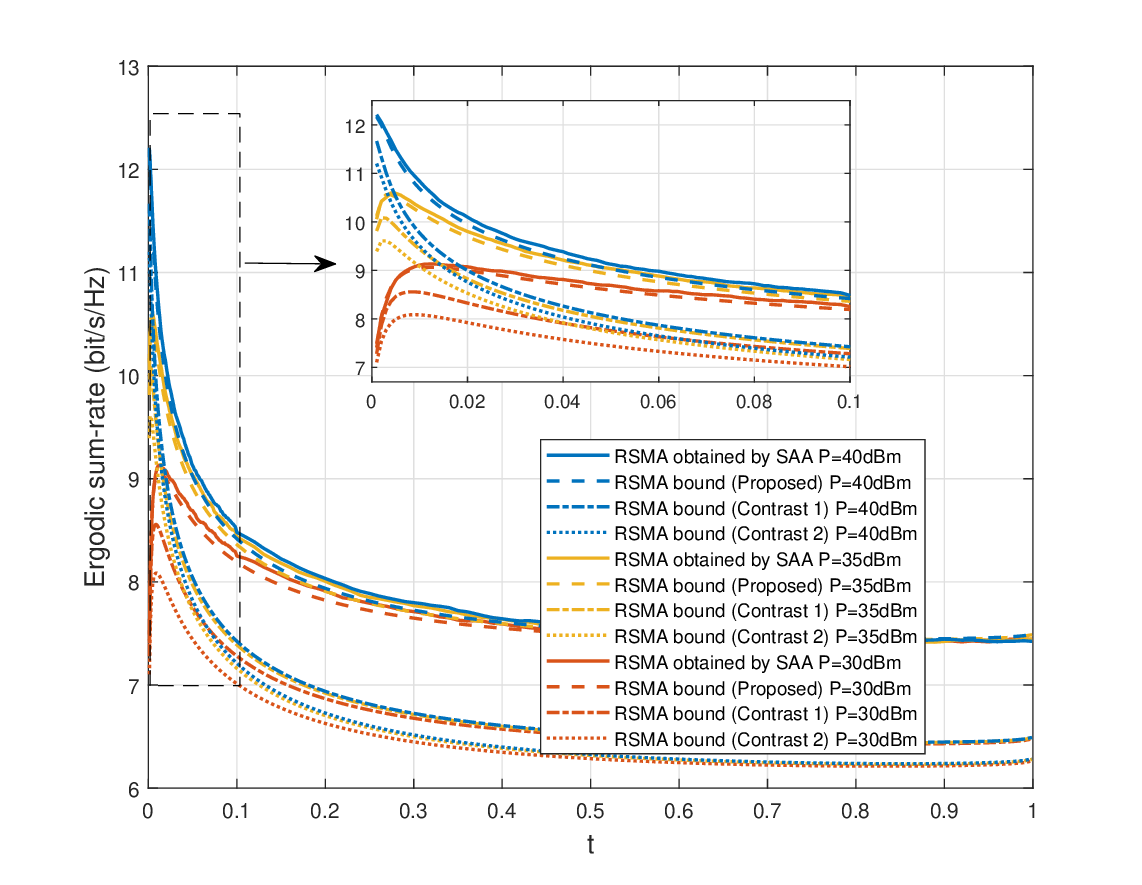}

        \caption{$N_t=8$, $K=4$, $\varepsilon=0.8$.}
        \label{fig_bound_a}
    \end{subfigure}

    \begin{subfigure}[t]{0.45\textwidth}
        \vspace{1mm}
        \centering
        \includegraphics[width=\linewidth]{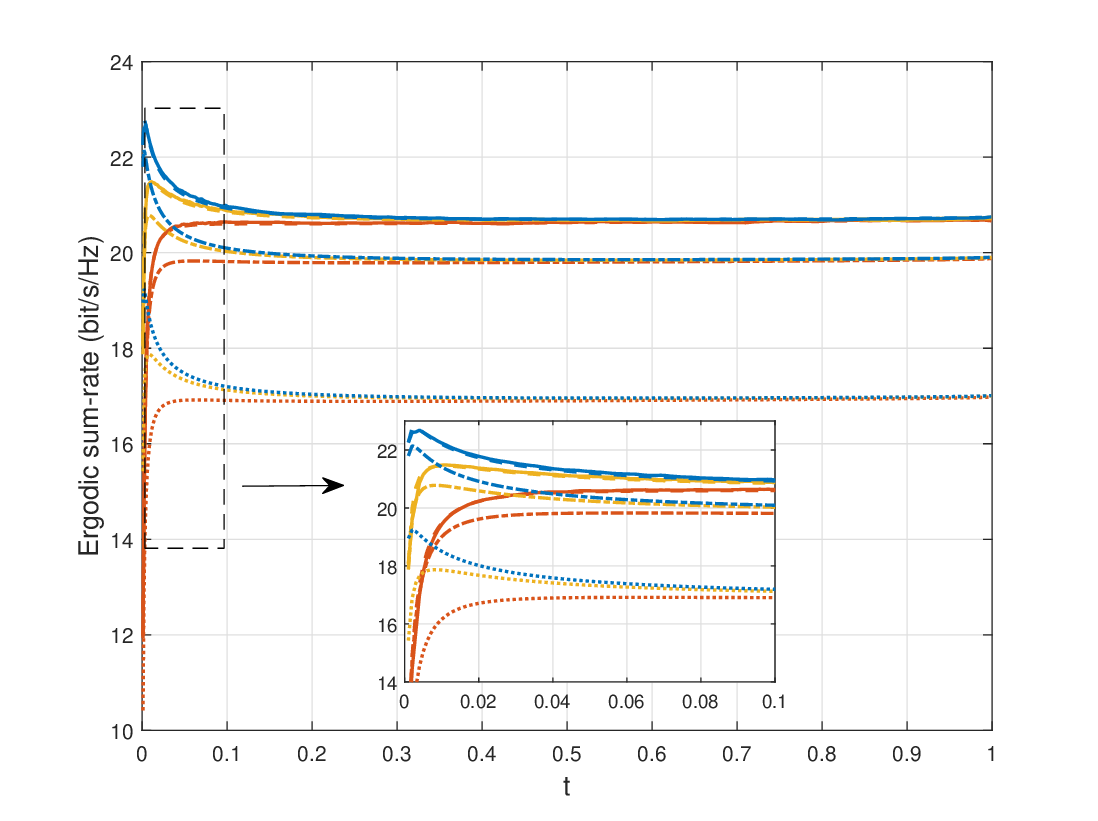}

        \caption{$N_t=32$, $K=8$, $\varepsilon=0.8$.}
        \label{fig_bound_b}
    \end{subfigure}
    \vspace{3mm}
    \caption{SAA, proposed method, and benchmark bounds.}
    \label{fig_bound2}
\end{figure}

We first evaluate the accuracy of the derived lower bound by comparing it with the simulated ergodic sum rate obtained via Monte Carlo simulation with $10^4$ channel realizations. Fig.~\ref{fig_bound2} presents the ergodic sum-rate results obtained from Monte Carlo simulation, the proposed closed-form lower bound, and two benchmark closed-form bounds, denoted as Contrast~1 and Contrast~2. Specifically, Contrast~1 and Contrast~2 correspond to two different closed-form ergodic rate lower bounds with the private stream derived in~\cite{11159579} and~\cite{10478577}, respectively. For these comparisons, we set $\zeta_{k}=1000$ and $\mu_k=1/K$, $\forall k \in \mathcal{K}$. Notice that our optimization targets the minimum user rate. Examining the sum rate provides a stricter test, as it reflects the total contribution from both the common and private streams. 

From Fig.~\ref{fig_bound2}, it can be observed that the proposed bound provides the tightest approximation to the actual ergodic sum rate among all considered methods. This is because, unlike the benchmark bounds that directly lower-bound the infinite-blocklength Shannon-capacity term $C(\cdot)$, the proposed approach approximates $C(\cdot)$ using a Gamma distribution, resulting in a tighter overall bound under the finite-blocklength regime and ensuring the reliability of the optimization in \eqref{eqn:problem_p0}. By comparing Fig.~\ref{fig_bound_a} and Fig.~\ref{fig_bound_b}, it is further observed that the proposed bound remains consistently tight and approaches the Monte Carlo results more closely as the number of transmit antennas increases, reaping the benefit of increased spatial degrees of freedom. 
In contrast, both benchmark bounds remain consistently looser than the proposed bound across all considered configurations.

\begin{figure}[t]
    \centering
    \includegraphics[width=0.45\textwidth]{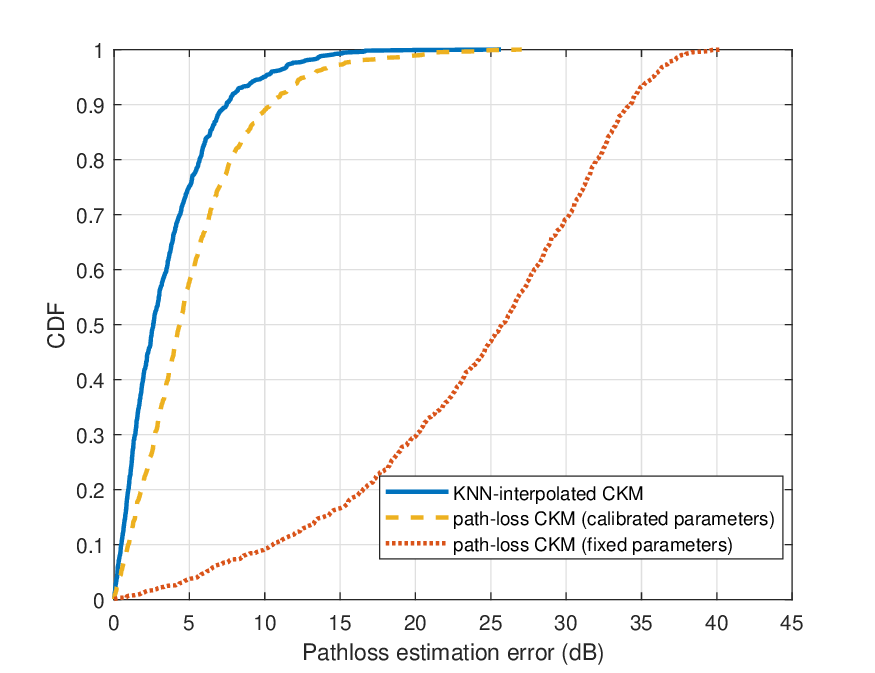}
    \caption{CDF of absolute estimation error for different CKM generation methods}
    \label{fig_CDF}
\end{figure}

In Fig.~\ref{fig_CDF}, the cumulative distribution functions (CDFs) of the absolute estimation errors in dB for different CKM generation methods are compared with the ground-truth CKM in Fig.~\ref{CKM_total}(a). The KNN-interpolated CKM corresponds to the data-driven CKM in Fig.~\ref{CKM_total}(b). For the analytical path-loss CKM, two parameter settings are considered. The first uses calibrated parameters obtained from measured path losses at reference distances of $100~\mathrm{m}$ and $900~\mathrm{m}$ based on \eqref{pathloss_model}. The second adopts fixed parameters with $p_0=30$ and $\alpha=3.7$, corresponding to a relatively severe urban propagation condition with strong large-scale attenuation. The KNN-interpolated CKM exhibits the leftmost CDF curve, indicating the lowest estimation errors among the considered methods. The calibrated path-loss CKM achieves moderate performance, while the fixed-parameter path-loss CKM exhibits the largest estimation errors. These results demonstrate the effectiveness of data-driven interpolation in improving CKM construction accuracy.

From the channel knowledge maps shown in Fig.~\ref{CKM_total}, four representative sampling points at coordinates $(50,0)$, $(250,0)$, $(450,0)$, and $(650,0)$ are selected, corresponding to four high-speed vehicles approximately 200~m apart along the road segment and consistently applied to all three CKMs for a fair comparison. Specifically, the resulting large-scale fading values (in dB) at these points are $\{76.767, 90.142, 106.507, 101.114\}$ for the ground-truth CKM obtained from dense sampling, $\{79.213, 90.688, 102.581, 99.423\}$ for the KNN-interpolated CKM, and $\{78.938, 96.372, 101.157, 104.017\}$ for the path-loss CKM. In the subsequent min-rate performance evaluation, resource allocation is performed based on the three CKMs, whereas the actual achievable rates are computed using the ground-truth CKM to ensure that the reported performance reflects the true channel conditions.

We consider three multiple access schemes—RSMA, SDMA, and NOMA. Their performance is evaluated under each of the three CKMs, resulting in a total of nine configurations. Specifically, the schemes are implemented as follows:

\begin{itemize}
    \item RSMA: Rate-splitting scheme proposed in this work, with optimized power allocation and common rate splitting based on the derived closed-form expressions. The global power coefficient $t$ is exhaustively searched over $t = 0.001 + (1-0.001)\, x^3$, with $x$ consisting of 400 uniformly spaced points in $[0,1)$.

    \item SDMA: Space-division multiple access with private power coefficients $\mu_k$ exhaustively searched via Monte Carlo in $[0,1]$ with step size $0.025$.

    \item NOMA: Non-orthogonal multiple access using inter-group spatial division multiplexing. Intra- and inter-group power allocations are exhaustively searched via Monte Carlo with a step size of 0.025, using one layer of SIC per user and proximity-based pairing.
\end{itemize}

\begin{figure}[t]
    \centering
    \includegraphics[width=0.45\textwidth]{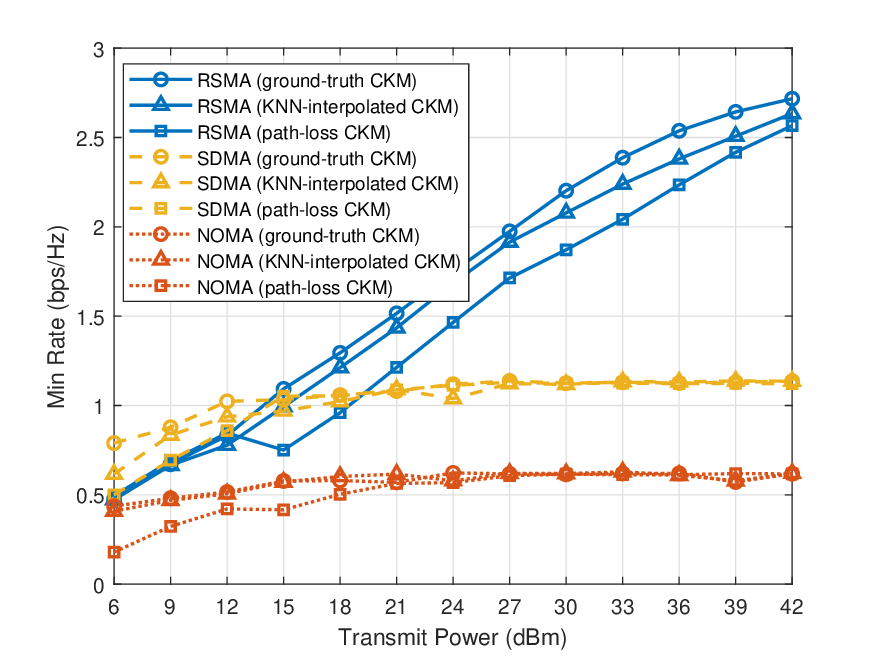}
    \caption{Ergodic min-rate versus transmit power}
    \label{fig_MinRate_vs_SNR}
\end{figure}

Fig.~\ref{fig_MinRate_vs_SNR} presents the min-rate performance of the considered transmission schemes as a function of the transmit power. In the high transmit power region, RSMA outperforms both SDMA and NOMA in terms of the minimum rate due to its flexible interference management and partial decoding capability. Among RSMA schemes, the ground-truth CKM yields the best performance, followed by the KNN-interpolated CKM, while the path-loss-based CKM performs the worst, exhibiting a clear performance stratification. This reflects RSMA’s higher sensitivity to CKM mismatches, as once the common stream is activated, the achievable minimum user rate heavily depends on the accuracy of common rate splitting, and the performance gap caused by CKM inaccuracies does not vanish even at high transmit power.At low transmit power, RSMA effectively reduces to SDMA with equal private power allocation and consequently underperforms SDMA with optimized power allocation. For SDMA, the ground-truth CKM achieves the highest minimum rate, followed by the KNN-interpolated CKM, whereas the path-loss-based CKM yields the lowest performance. For NOMA, both the ground-truth and KNN-interpolated CKMs preserve the correct strong–weak user ordering and therefore incur no noticeable degradation; In contrast, the path-loss-based CKM occasionally misidentifies the strong–weak ordering of users, leading to a discernible rate loss. As the transmit power increases, the influence of CKM inaccuracies on SDMA and NOMA becomes negligible because their optimal power allocations progressively converge to more balanced patterns across users or user groups, thereby reducing sensitivity to large-scale fading errors.

\begin{figure}[t]
    \centering
    \includegraphics[width=0.45\textwidth]{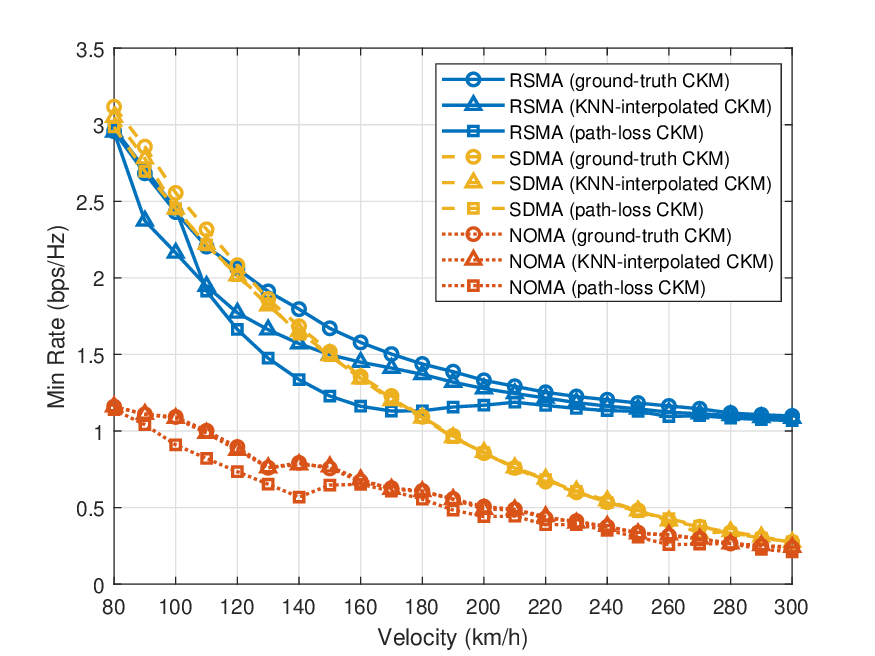}
    \caption{Ergodic min-rate versus vehicle velocity}
    \label{fig_MinRate_vs_Velocity}
\end{figure}

Fig.~\ref{fig_MinRate_vs_Velocity} illustrates the ergodic min-rate performance of the considered schemes as a function of vehicle velocity. As the velocity increases, RSMA exhibits stronger robustness, whereas the performance of SDMA and NOMA degrades due to their limited ability to fully suppress inter-user or inter-group interference under rapidly varying channels. In addition, the performance gap caused by different CKM maps narrows for all schemes at higher velocities. For RSMA, this is because the system increasingly allocates power to the common stream to mitigate interference, which reduces the contribution of private-stream inaccuracies. As a result, the common-stream splitting across different CKM maps converges toward a nearly uniform allocation. Similarly, SDMA and NOMA tend toward more balanced power allocation across users or user groups as velocity increases, which further diminishes the impact of CKM map discrepancies.


\begin{figure}[t]
    \centering
    \begin{subfigure}[t]{0.45\textwidth}
        \centering
        \includegraphics[width=\linewidth]{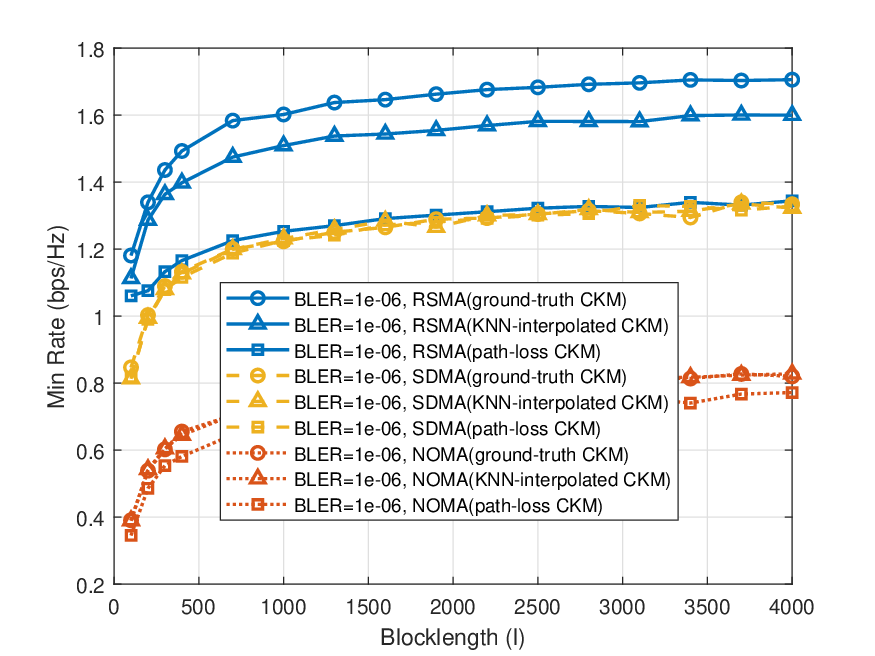}
        \caption{BLER = $10^{-6}$.}
        \label{fig_MinRate_1e-6}
    \end{subfigure}
    \begin{subfigure}[t]{0.45\textwidth}
        \centering
        \vspace{1mm}
        \includegraphics[width=\linewidth]{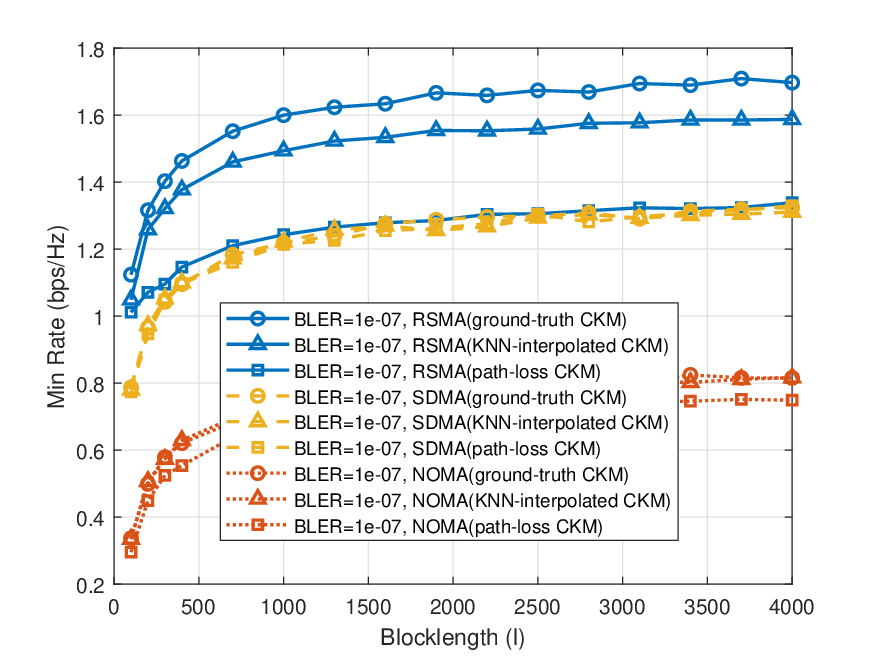}
        \caption{BLER = $10^{-7}$.}
        \label{fig_MinRate_1e-7}
    \end{subfigure}
    \vspace{2mm}
    \caption{Ergodic min-rate versus blocklength under different BLERs}
    \label{fig_MinRate_vs_Blocklength}
\end{figure}

Fig.~\ref{fig_MinRate_vs_Blocklength} depicts the minimum achievable rate versus the blocklength under two BLER targets.
For a given BLER, the min-rate increases monotonically with the blocklength.
By comparing Fig.~\ref{fig_MinRate_1e-6} and Fig.~\ref{fig_MinRate_1e-7}, it is observed that tightening the BLER constraint from $10^{-6}$ to $10^{-7}$ leads to a moderate reduction in the achievable min-rate for all considered schemes.
Across both BLER settings, RSMA consistently achieves the highest minimum rate, indicating its ability to meet the target minimum rate with shorter blocklengths and stricter reliability requirements.
Moreover, RSMA exhibits noticeable sensitivity to the CKM accuracy due to its reliance on precise large-scale channel information for common-stream rate splitting, whereas the performance of SDMA is nearly invariant to the CKM choice, and NOMA shows a discernible gain only when compared with the path-loss-based CKM.
Overall, SDMA and NOMA are less affected by CKM variations, as system performance in the considered high-mobility regime is primarily constrained by Doppler-induced inter-user interference rather than large-scale channel knowledge inaccuracies.

\section{Conclusion}
\label{conclusion}
This paper investigated RSMA in high-mobility vehicular networks, leveraging CKMs to obtain large-scale channel information. The proposed design explicitly accounts for uncertainties in the large-scale channel knowledge provided by CKMs, while also considering small-scale estimation errors induced by Doppler effects. We derived a tight closed-form lower bound for achievable rates under finite blocklength constraints, where the bound captures the effects of vehicle speed, number of antennas, number of users, blocklength, and transmit power. Using CKM data and this bound, we optimized transmit power and common-stream splitting to enhance user fairness and link reliability. Simulation results indicate that CKM-assisted RSMA achieves the highest performance in high-mobility scenarios, demonstrating strong robustness against Doppler-induced interference, while remaining sensitive to CKM accuracy for common-stream allocation. In contrast, SDMA and NOMA are primarily limited by Doppler effects and less affected by CKM variations. These findings suggest that RSMA’s inherent robustness to small-scale variations, combined with accurate CKM-based large-scale channel information, can substantially improve performance and reduce pilot overhead in dynamic vehicular environments.

\appendices
  \vspace{-2mm}
\section{PROOF OF LEMMA \ref{log1plusZ}}

Let $\widetilde Z \sim \mathrm{Gamma}(\widetilde D_Z,\widetilde \theta_Z)$ with integer
$\widetilde D_Z \ge 1$ and $\widetilde \theta_Z > 0$. We prove Lemma~\ref{log1plusZ} using a recursive approach: starting from the base case of $\widetilde D_Z = 1$, we derive a closed-form expression and then establish a recursive relation for the general integer $\widetilde D_Z \ge 1$, which is finally summed to obtain the desired result.

\textbf{Step 1: Base case ($\widetilde Z_1$).}  
Let $\widetilde Z_1 \sim \mathrm{Gamma}(1, \widetilde \theta_Z)$. 
When the shape parameter equals one, $\widetilde Z_1$ reduces to an exponential random variable with probability density function
$f_{\widetilde Z_1}(z)=\frac{1}{\widetilde \theta_Z}e^{-z/\widetilde \theta_Z}$.
Then,
\begin{align}
\mathbb{E}[\log_2(1+\widetilde Z_1)]
&= \frac{1}{\widetilde \theta_Z \ln 2}
\int_0^\infty e^{-z/\widetilde \theta_Z}\ln(1+z) {\rm{d}}z \nonumber\\
&= \frac{1}{\widetilde \theta_Z \ln 2}
\int_0^\infty \ln(1+z){\rm{d}}\left(-\widetilde \theta_Z e^{-z/\widetilde \theta_Z}\right) \nonumber\\
&= \frac{1}{\ln 2}
\int_0^\infty \frac{e^{-z/\widetilde \theta_Z}}{1+z}{\rm{d}}z,
\label{eq:basecase-ibp}
\end{align}
where \eqref{eq:basecase-ibp} follows from integration by parts with vanishing boundary terms.
By the change of variable $u = 1+z$, the above integral can be rewritten as
\begin{equation}
\mathbb{E}[\log_2(1+\widetilde Z_1)]
= \frac{e^{1/\widetilde \theta_Z}}{\ln 2}
\int_1^\infty \frac{e^{-u/\widetilde \theta_Z}}{u}{\rm{d}}u
= \frac{e^{1/\widetilde \theta_Z}}{\ln 2}
E_1\!\left(\frac{1}{\widetilde \theta_Z}\right),
\label{eq:ergodic-rate-base}
\end{equation}
where \(
E_v(x) = \int_1^\infty t^{-v} e^{-t x} {\rm{d}}t
\).

\textbf{Step 2: Recursive relation.}

For any nonnegative random variable $\widetilde Z$ with CDF $F_{\widetilde Z}(z)$, integration by parts gives
\begin{equation}
\mathbb{E}[\log_2(1+\widetilde Z)]
= \frac{1}{\ln 2} \int_0^\infty \frac{1 - F_{\widetilde Z}(z)}{1+z}{\rm{d}}z.
\label{eq:log_expect_cdf}
\end{equation}

Now consider $\widetilde Z_d \sim \mathrm{Gamma}(d, \widetilde \theta_Z)$ with integer shape parameter $d \ge 1$. The survival function is
\begin{equation}
1 - F_{\widetilde Z_d}(z)
= \frac{\Gamma(d, z/\widetilde \theta_Z)}{(d-1)!},
\label{eq:gamma_survival}
\end{equation}
where $\Gamma(s,x)=\int_x^\infty t^{s-1}e^{-t}{\rm{d}}t$.

Substituting this into the expectation formula, we have
{\small
\begin{multline}
\mathbb{E}[\log_2(1+\widetilde Z_{d+1})]
- \mathbb{E}[\log_2(1+\widetilde Z_d)] \\
= \frac{1}{\ln 2} \int_0^\infty \frac{\Gamma(d+1, z/\widetilde \theta_Z)}{d!\,(1+z)}{\rm{d}}z
- \frac{1}{\ln 2} \int_0^\infty \frac{\Gamma(d, z/\widetilde \theta_Z)}{(d-1)!\,(1+z)} {\rm{d}}z.
\label{eq:gamma-substitution}
\end{multline}
}

Using the recurrence relation of the upper incomplete Gamma function,
\[
\Gamma(d+1, x) = d \, \Gamma(d,x) + x^d e^{-x},
\]
and applying it to the first integral in \eqref{eq:gamma-substitution}, the difference simplifies to
\begin{multline}
\mathbb{E}[\log_2(1+\widetilde Z_{d+1})]
- \mathbb{E}[\log_2(1+\widetilde Z_d)] \\
= \frac{1}{\ln 2} \int_0^\infty \frac{(z/\widetilde \theta_Z)^d e^{-z/\widetilde \theta_Z}}{d!\,(1+z)} {\rm{d}}z.
\label{eq:diff-integral}
\end{multline}

With the variable substitution $z = \widetilde \theta_Z u$, ${\rm{d}}z = \widetilde \theta_Z {\rm{d}}u$, \eqref{eq:diff-integral} becomes

\begin{equation}
\label{integral-one}
\frac{1}{\ln 2} \int_0^\infty \frac{(z/\widetilde \theta_Z)^d e^{-z/\widetilde \theta_Z}}{d!\,(1+z)} {\rm{d}}z
= \frac{\widetilde \theta_Z}{\ln 2 \, d!} \int_0^\infty \frac{u^d e^{-u}}{1 + \widetilde \theta_Z u} {\rm{d}} u
\end{equation}

Using the Laplace transform, the denominator in the above integral can be expanded and expressed as
\[
\frac{1}{1 + \widetilde \theta_Z u} = \int_0^\infty e^{-(1+\widetilde \theta_Z u)t} {\rm{d}}t,
\]
Accordingly, \eqref{integral-one} can be expressed as a double integral, and by interchanging the order of integration, we obtain
\begin{align}
\frac{\widetilde \theta_Z}{\ln 2 \, d!} \int_0^\infty & 
\frac{u^d e^{-u}}{1 + \widetilde \theta_Z u} {\rm{d}}u \notag\\
&= \frac{\widetilde \theta_Z}{\ln 2 \, d!} \int_0^\infty e^{-t} 
\Biggl( \int_0^\infty u^d e^{-u(1+\widetilde \theta_Z t)} {\rm{d}}u \Biggr) {\rm{d}}t
\label{eq_double_integral}
\end{align}

Evaluating the inner integral using the Gamma function
\(
\Gamma(d+1) = \int_0^\infty t^d e^{-t} {\rm{d}}t, \quad d \in \mathbb{Z}_{\ge 0},
\)
which reduces to the factorial $d!$. We make the change of variable
$v = u (1+\widetilde \theta_Z t)$, yielding

\begin{align}
\int_0^\infty u^d e^{-u(1+\widetilde \theta_Z t)} {\rm{d}}u
&= \frac{1}{(1+\widetilde \theta_Z t)^{d+1}} \int_0^\infty v^d e^{-v} {\rm{d}}v \notag\\
&= \frac{\Gamma(d+1)}{(1+\widetilde \theta_Z t)^{d+1}} \notag\\
&= \frac{d!}{(1+\widetilde \theta_Z t)^{d+1}},
\label{eq_double_djiecheng}
\end{align}

Thus, substituting \eqref{eq_double_djiecheng} into \eqref{eq_double_integral} becomes
\begin{equation}
\frac{\widetilde \theta_Z}{\ln 2} \int_0^\infty \frac{e^{-t}}{(1+\widetilde \theta_Z t)^{d+1}} {\rm{d}}t.
\end{equation}

Then, applying the change of variable $1+\widetilde \theta_Z t = o$, we have 
\begin{align}
\frac{\widetilde \theta_Z}{\ln 2} \int_1^\infty & 
\frac{e^{-(o-1)/\widetilde \theta_Z}}{o^{d+1}} \frac{{\rm{d}}o}{\widetilde \theta_Z} \notag\\
&= \frac{1}{\ln 2} \int_1^\infty \frac{e^{1/\widetilde \theta_Z} e^{-o/\widetilde \theta_Z}}{o^{d+1}} {\rm{d}}o \notag\\
&= \frac{e^{1/\widetilde \theta_Z}}{\ln 2} \int_1^\infty \frac{e^{-o/\widetilde \theta_Z}}{o^{d+1}} {\rm{d}}o
\end{align}

Finally, recognizing the integral as the generalized exponential integral
\(
E_v(x) = \int_1^\infty t^{-v} e^{-t x} {\rm{d}}t
\), we obtain the recursive relation
\begin{equation}
\mathbb{E}[\log_2(1+\widetilde Z_{d+1})]
- \mathbb{E}[\log_2(1+\widetilde Z_d)]
= \frac{e^{1/\widetilde \theta_Z}}{\ln 2} E_{d+1}\!\left(\frac{1}{\widetilde \theta_Z}\right),
\label{eq:gamma-recursive}
\end{equation}which establishes the recursive step used for the telescoping sum.

\textbf{Step 3: Telescoping sum.}  
For $\widetilde Z \sim \mathrm{Gamma}(\widetilde D_Z,\widetilde \theta_Z)$ with integer
$\widetilde D_Z \ge 2$ and $\widetilde \theta_Z > 0$, summing the recursive relation
\eqref{eq:gamma-recursive} from $d=1$ to $d=\widetilde D_Z-1$ and using the base case
in \eqref{eq:ergodic-rate-base} yields
\begin{equation}
\mathbb{E}[\log_2(1+\widetilde Z)]
= \frac{e^{1/\widetilde \theta_Z}}{\ln 2}
\sum_{i=0}^{\widetilde D_Z-1}
E_{i+1}\!\left(\frac{1}{\widetilde \theta_Z}\right).
\label{eq:ergodic-rate-general}
\end{equation}
This expression also holds for $\widetilde D_Z = 1$, thereby completing the proof.

\ifCLASSOPTIONcaptionsoff
  \newpage
\fi

\bibliographystyle{IEEEtran}
\balance
\bibliography{reference.bib}

\end{document}